\documentclass[12pt, draftclsnofoot, onecolumn]{IEEEtran}
\usepackage{amsmath}
\usepackage{amssymb}
\usepackage{amsfonts}
\usepackage{graphicx}
\usepackage{epsfig}
\usepackage{subfigure}
\usepackage{psfrag}
\usepackage{xcolor}

\linespread{1.1}

\title{Joint Power Control and Fronthaul Rate Allocation for Throughput Maximization in OFDMA-based Cloud Radio Access Network
\footnote {L. Liu and S. Bi are with the Department of Electrical
and Computer Engineering, National University of Singapore
(e-mail:\{liu\_liang,bsz@nus.edu.sg\}).}\footnote{R. Zhang is with the
Department of Electrical and Computer Engineering, National
University of Singapore (e-mail:elezhang@nus.edu.sg). He is also
with the Institute for Infocomm Research, A*STAR, Singapore.}}

\author{Liang Liu, Suzhi Bi, and Rui Zhang}

\setlength{\textwidth}{6.5in} \setlength{\textheight}{9.7in}
\setlength{\topmargin}{-0.8in} \setlength{\oddsidemargin}{-0.30in}

\begin{document}

\maketitle \thispagestyle{empty} \vspace{-0.3in}

\begin{abstract}
The performance of cloud radio access network (C-RAN) is constrained by the limited fronthaul link capacity under future heavy data traffic. To tackle this problem, extensive efforts have been devoted to design efficient signal quantization/compression techniques in the fronthaul to maximize the network throughput. However, most of the previous results are based on information-theoretical quantization methods, which are hard to implement practically due to the high complexity. In this paper, we propose using practical uniform scalar quantization in the uplink communication of an orthogonal frequency division multiple access (OFDMA) based C-RAN system, where the mobile users are assigned with orthogonal sub-carriers for transmission. In particular, we study the joint wireless power control and fronthaul quantization design over the sub-carriers to maximize the system throughput. Efficient algorithms are proposed to solve the joint optimization problem when either information-theoretical or practical fronthaul quantization method is applied. We show that the fronthaul capacity constraints have significant impact to the optimal wireless power control policy. As a result, the joint optimization shows significant performance gain compared with optimizing only wireless power control or fronthaul quantization. Besides, we also show that the proposed simple uniform quantization scheme performs very close to the throughput performance upper bound, and in fact overlaps with the upper bound when the fronthaul capacity is sufficiently large. Overall, our results reveal practically achievable throughput performance of C-RAN for its efficient deployment in the next-generation wireless communication systems.
\end{abstract}

\begin{keywords}
Cloud radio access network (C-RAN), fronthaul constraint, quantize-and-forward, orthogonal frequency division multiple access (OFDMA), power control, throughput maximization.
\end{keywords}

\setlength{\baselineskip}{1.3\baselineskip}
\newtheorem{definition}{\underline{Definition}}[section]
\newtheorem{fact}{Fact}
\newtheorem{assumption}{Assumption}
\newtheorem{theorem}{\underline{Theorem}}[section]
\newtheorem{lemma}{\underline{Lemma}}[section]
\newtheorem{corollary}{\underline{Corollary}}[section]
\newtheorem{proposition}{\underline{Proposition}}[section]
\newtheorem{example}{\underline{Example}}[section]
\newtheorem{remark}{\underline{Remark}}[section]
\newtheorem{algorithm}{\underline{Algorithm}}[section]
\newcommand{\mv}[1]{\mbox{\boldmath{$ #1 $}}}

\section{Introduction}

\subsection{Motivation}
The dramatic increase of mobile data traffic in the recent years has posed imminent challenges to the current cellular systems, requiring higher throughput, larger coverage, and smaller communication delay. The 5G cellular system on the roadmap is expected to achieve up to 1000 times of throughput improvement over today's 4G standard. As a promising candidate for the future 5G standard, cloud radio access network (C-RAN) enables a centralized processing architecture, using multiple relay-like base stations (BSs), named remote radio heads (RRHs), to serve mobile users cooperatively under the coordination of a central unit (CU) \cite{ChinaMobile}. For the practical deployment of C-RAN, a cluster-based C-RAN system is shown in Fig. \ref{fig1}, where the same frequency bands could be reused over non-adjacent or even adjacent C-RAN clusters to increase spectral efficiency through coordination among CUs by applying certain interference management techniques such as dynamic resource allocation \cite{Rui10}. Within each C-RAN cluster, the RRHs are connected to a CU that is further connected to the core network via high-speed fiber fronthaul and backhaul links, respectively. In a C-RAN, a mobile user could be associated with multiple RRHs. However, unlike the BSs in conventional cellular systems which encode/decode user messages locally, the RRHs merely forward the signals to/from the mobile users, while leaving the joint encoding/decoding complexity to a baseband unit (BBU) in the CU. The use of inexpensive and densely deployed RRHs, along with the advanced joint processing mechanism, could significantly improve upon the current 4G system with enhanced scalability, increased throughput and extended coverage.

\begin{figure}
\begin{center}
\scalebox{0.4}{\includegraphics*{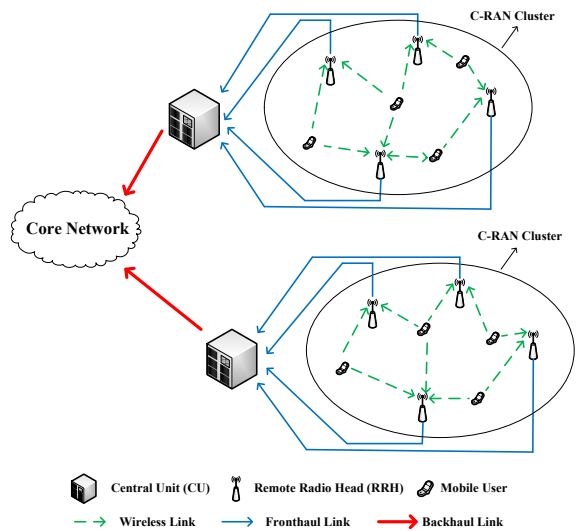}}
\end{center}\vspace{-20pt}
\caption{An illustration of the cluster-based C-RAN in the uplink.}\label{fig1} \vspace{-20pt}
\end{figure}

The distributed antenna system formed by the RRHs enables spectrum efficient spatial division multiple access (SDMA) in C-RAN, which has gained extensive research attentions \cite{Gesbert10}--\cite{Shamai13}. In the uplink communication of an SDMA based C-RAN, all mobile users in the same cluster transmit on the same spectrum and at the same time, while the BBU performs multi-user detection (MUD) to separate the user messages. In practice, however, the implementation of MUD is hurdled by the high computational complexity and the difficulty in signal synchronization as well as perfect channel estimation. Similarly, the downlink communication using SDMA is also of high complexity in the encoding design to mitigate the co-channel interference. With this regard, orthogonal frequency division multiple access (OFDMA) is an alternative candidate for C-RAN because of its efficient spectral usage and yet low encoding/decoding complexity. In OFDMA-based C-RAN systems, users are allocated with orthogonal subcarriers (SCs) free of co-channel interference. In this case, simple maximal-ratio combining (MRC) technique could be performed at the CU over the signals received from different RRHs to decode a user's message transmitted on its designated SC. Moreover, OFDMA is compatible with the current wireless communication systems such as 4G LTE. Considering its potential implementations in future wireless systems and compatibility with the current 4G standards, we consider OFDMA for the cluster-based C-RAN (see Fig. \ref{fig1}) in this paper.

\subsection{Prior Work}

The performance of a C-RAN system is constrained by the fronthaul link capacity. With densely deployed RRHs, the fronthaul traffic generated from a single user signal of MHz bandwidth could be easily scaled up to multiple Gbps \cite{ChinaMobile}. In practice, a commercial fiber link with tens of Gbps capacity could thus be easily overwhelmed even under moderate mobile traffic. To tackle this problem, many signal compression/quantization methods have been proposed to optimize the system performance under fronthaul capacity constraints. Specifically, the so-called ``quantize-and-forward'' scheme is widely adopted for the uplink communication in C-RAN to reduce the communication rates between the BBU and RRHs \cite{Steinberg09}--\cite{Yu13}, where each RRH samples, quantizes and forwards its received signals to the BBU over its fronthaul link. The quantize-and-forward scheme is initially studied in relay channel as an efficient way for the relay to deliver the received signal from the source to the destination \cite{Tse11,Chung11,Kramer08}. In the uplink communication of C-RAN, which can be viewed as a special case of relay channel model with a wireless first-hop link and wired (fiber) second-hop link, quantize-and-forward scheme is studied under an information-theoretical Gaussian test channel model with the uncompressed signals as the input and compressed signals as the output corrupted by an additive Gaussian compression noise. Then, the quantization methods are designed through setting the quantization noise levels at different RRHs to maximize the end-to-end throughput subject to the capacity constraints of individual fronthaul links. Specifically, the optimal quantization design needs to consider the signal correlation across the multiple RRHs, where methods based on distributed source coding, e.g., Wyner-Ziv coding, are widely used to jointly optimize the noise levels at the RRHs (see e.g., \cite{Simeone13}--\cite{Yu13}). Besides, quantization method based on distributed source coding is also studied in the downlink communication of C-RAN in \cite{Shamai13}.

Despite of their respective contributions to the understanding of the theoretical limits of C-RAN, most of the proposed quantization methods are based on information-theoretical models, e.g., Gaussian test channel and distributed source coding, which are practically hard to implement. On one hand, although the quantization noise levels across different RRHs that maximize the end-to-end throughput are found in \cite{Steinberg09}--\cite{Yu13} under different system setups, it is still unknown how to practically design quantization codebook at each RRH to achieve the required quantization noise level for the Gaussian test channel model. On the other hand, the decompression complexity of distributed source coding grows exponentially with the number of sources (e.g., RRHs in the uplink communication). In practice, the complexity can be prohibitively high in a C-RAN with a large number of cooperating RRHs. Therefore, it still remains as a question about the practically achievable throughput of C-RAN using practical quantization methods, such as uniform scalar or vector quantization used in common A/D modules \cite{Gray98}, which are independently applied over RRHs.

Furthermore, most of the existing works (e.g., \cite{Steinberg09}--\cite{Yu13}) only study signal compression methods in C-RAN under fixed wireless resource allocation. However, the end-to-end performance of C-RAN is determined by both the wireless and fronthaul links. In an OFDMA system, transmit power allocation over frequency SCs directly determines the spectral efficiency of wireless link. For an OFDMA-based system without fronthaul constraint, the optimal power allocation problem is extensively studied, e.g., it follows the celebrated water-filling policy for a single user case \cite{Goldsmith05}. However, the behavior of optimal SC power allocation in a fronthaul constrained system like C-RAN is still unknown to the authors' best knowledge.

\subsection{Main Contribution}
In this paper, we address the above problems in an OFDMA-based C-RAN. In particular, we consider using simple uniform scalar quantization instead of the information-theoretical quantization method based on Gaussian test channel, and propose joint wireless power control and fronthaul rate allocation design to maximize the system throughput performance. Our main contributions are summarized as follows:

\begin{itemize}
\item In the uplink communication of an OFDMA-based C-RAN, we derive the end-to-end sum-rate of all the users subject to each RRH's fronthaul capacity constraint achieved by a simple uniform scalar quantization at each RRH together with independent compression among RRHs. Different from prior works based on Gaussian test channel model, this provides for the first time an achievable rate result for C-RAN with a practically implementable quantization method.

\item With the derived rate under uniform scalar quantization, we formulate the optimization problem of joint wireless power control and fronthaul rate allocation to maximize the sum-rate performance in OFDMA based C-RAN. We also formulate the problem based on the Gaussian test channel model to obtain performance benchmark. Efficient algorithms are proposed to solve the formulated joint optimization problems based on the alternating optimization technique.

\item By investigating the single-user and single-RRH special case, we obtain important insights on the optimal wireless power control and fronthaul rate allocation over SCs. For example, with a fixed fronthaul rate allocation, we show that the optimal power allocation over SCs is a threshold based policy depending on the channel power of a SC, i.e., no power is allocated to a SC if the channel power is below the threshold. Interestingly, we find that the power allocation under fronthaul rate constraint in general does not follow a water-filling policy that always allocates more power to SC with higher channel power. The inconsistency is especially evident in low-fronthaul-rate region, where the SC with the highest channel power may receive the least transmit power, and vice versa. We also theoretically quantify the performance gap between the proposed simple uniform quantization scheme from the throughput upper (cut-set) bound. By simulations we show that the throughput performance of the simple uniform quantization scheme is very close to the performance upper bound, and in fact overlaps with the upper bound when the fronthaul capacity is sufficiently large.
\end{itemize}

\subsection{Organization}
The rest of this paper is organized as follows. We first introduce in Sections \ref{sec:System Model} and \ref{sec:Two Scalar Quantization Models} the system model of C-RAN and the quantization techniques used in the fronthaul signal processing, respectively. In Section \ref{sec:Problem Formulation}, we formulate the end-to-end sum-rate maximization problems for both the Gaussian test channel and uniform scalar quantization models. Sections \ref{sec:Special Case: Single User and Single RRH} and \ref{sec:General Case: Multiple Users and Multiple RRHs} solve the formulated problems for the special case of single-user and single-RRH and general case of multi-user and multi-RRH, respectively. Finally, we conclude the paper and point out some directions for future work in Section \ref{sec:Conclusion}.

\section{System Model}\label{sec:System Model}

We consider the uplink of a clustered C-RAN. As shown in Fig. \ref{fig1}, each cluster consists of one BBU, $M$ single-antenna RRHs, denoted by the set $\mathcal{M}=\{1,\cdots,M\}$, and $K$ single-antenna users, denoted by the set $\mathcal{K}=\{1,\cdots,K\}$. It is assumed that each RRH $m$, $\forall m\in \mathcal{M}$, is connected to the BBU through a noiseless wired fronthaul link of capacity $\bar{T}_m$ bps. In the uplink, each RRH receives user signals over the wireless link and forwards to the BBU via its fronthaul link. Then, the BBU jointly decodes the users' messages based on the signals from all the RRHs within the cluster and forwards the decoded information to the core network through a backhaul link. The detailed signal models in the wireless and the fronthaul links are introduced in the following.

\subsection{OFDMA-based Wireless Transmission}\label{sec:OFDMA-based Franthaul Transmission}

In this paper, we consider OFDMA-based uplink information transmission between the $K$ users and the $M$ RRHs over a wireless link of a $B$Hz total bandwidth equally divided into $N$ SCs. The SC set is denoted by $\mathcal{N}=\{1,\cdots,N\}$. It is assumed that each SC $n\in \mathcal{N}$ is only allocated to one user. Denote $\Omega_k$ as the set of SCs allocated to user $k$, $\forall k\in\mathcal{K}$. In practice, dynamic SC allocation could be used to enhance the spectral efficiency by assigning SCs to users of favorable wireless link conditions, e.g., allocating a SC to the user with the highest signal-to-interference-plus-noise ratio (SINR).  However, as an initial attempt to understand the joint design of the wireless resource allocation and fronthaul rate allocation in fronthaul constrained C-RAN, it is assumed for simplicity in this paper that the SC allocations among users, i.e., $\Omega_k$'s, are pre-determined. The interesting case with dynamic SC allocation is left for future study.

\begin{figure}
\begin{center}
\scalebox{0.6}{\includegraphics*{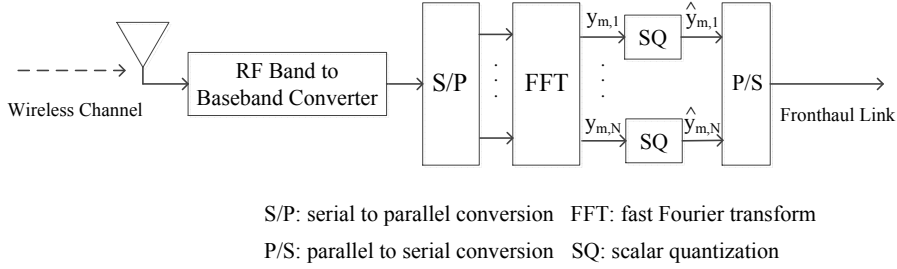}}
\end{center}\vspace{-20pt}
\caption{The structure of signal processing at an RRH.}\label{fig2} \vspace{-20pt}
\end{figure}

Specifically, in the uplink each user $k$, $\forall k\in \mathcal{K}$, first generates an OFDMA modulated signal over its assigned SCs and then transmits to the RRHs in the same cluster. As shown in Fig. \ref{fig2}, each RRH $m$, $\forall m\in \mathcal{M}$, first downconverts the received RF signals to the baseband, then transforms the serial baseband signals to the parallel ones, and demodulates the parallel signals into $N$ streams by performing fast Fourier transform (FFT). Suppose that $n\in \Omega_k$, then the equivalent baseband complex symbol received by RRH $m$ at SC $n$ can be expressed as
\begin{align}\label{eqn:received signal}
y_{m,n}=h_{m,k,n}\sqrt{p_{k,n}}s_{k,n}+z_{m,n},
\end{align}where $s_{k,n}\sim\mathcal{CN}(0,1)$ denotes the transmit symbol of user $k$ at SC $n$ (which is modelled as a circularly symmetric complex Gaussian random variable with zero-mean and unit-variance), $p_{k,n}$ denotes the transmit power of user $k$ at SC $n$, $h_{m,k,n}$ denotes the channel from user $k$ to RRH $m$ at SC $n$, and $z_{m,n}\sim\mathcal{CN}(0,\sigma_{m,n}^2)$ denotes the aggregation of additive white Gaussian noise (AWGN) and (possible) out-of-cluster interference at RRH $m$ at SC $n$. It is assumed that $z_{m,n}$'s are independent over $m$ and $n$.

\subsection{Quantize-and-Forward Processing at RRH}\label{sec:Quantize-and-Forwards-based Backhaul Transmission}

To forward the baseband symbols $y_{m,n}$'s to the BBU via the fronthaul links, the so-called ``quantize-and-forward'' scheme is applied, where each RRH first quantizes its baseband received signal and then sends the corresponding digital codewords to the BBU. Specifically, since at each RRH the received symbols at all the SCs are independent with each other and we assume independent signal quantization at different RRHs, a simple scalar quantization on $y_{m,n}$'s is optimal as shown in Fig. \ref{fig2}. The baseband quantized symbol of $y_{m,n}$ is then given by
\begin{align}\label{eqn:quantized signal}
\tilde{y}_{m,n}=y_{m,n}+e_{m,n}=h_{m,k,n}\sqrt{p_{k,n}}s_{k,n}+z_{m,n}+e_{m,n},
\end{align}where $e_{m,n}$ denotes the quantization error for the received symbol $y_{m,n}$ with zero mean and variance $q_{m,n}$. Note that $e_{m,n}$'s are independent over $n$ due to scalar quantization at each SC, and over $m$ due to independent compression among RRHs. Then, each RRH transforms the parallel encoded bits $\hat{y}_{m,n}$'s into the serial ones and sends them to the BBU via its fronthaul link for joint information decoding.

After collecting the digital codewords, the BBU first recovers the baseband quantized symbols $\tilde{y}_{m,n}$'s based on the quantization codebooks used by each RRH. Then, to decode $s_{k,n}$, the BBU applies a linear combining on the quantized symbols at SC $n$ collected from all RRHs:
\begin{align}\label{eqn:beamforming}
\hat{s}_{k,n}=\mv{w}_n^H\tilde{\mv{y}}_n=\mv{w}_n^H\mv{h}_{k,n}\sqrt{p_{k,n}}s_{k,n}+\mv{w}_n^H\mv{z}_n+\mv{w}_n^H\mv{e}_{n}, ~~~ n\in \Omega_k, ~ k=1,\cdots,K,
\end{align}where $\tilde{\mv{y}}_n=[\tilde{y}_{1,n},\cdots,\tilde{y}_{M,n}]^T$, $\mv{h}_{k,n}=[h_{1,k,n},\cdots,h_{M,k,n}]^T$, $\mv{z}_n=[z_{1,n},\cdots,z_{M,n}]^T$, and $\mv{e}_{n}=[e_{1,n},\cdots,e_{M,n}]^T$. According to (\ref{eqn:beamforming}), the SNR for decoding $s_{k,n}$ is expressed as
\begin{align}\label{eqn:SINR}
\gamma_{k,n}=\frac{p_{k,n}|\mv{w}_n^H\mv{h}_{k,n}|^2}{\mv{w}_n^H\left({\rm diag}(\sigma_{1,n}^2,\cdots,\sigma_{M,n}^2)+{\rm diag}(q_{1,n},\cdots,q_{M,n})\right)\mv{w}_n}, ~~~ n\in \Omega_k, ~ k=1,\cdots,K,
\end{align}where ${\rm diag}(\mv{a})$ denotes a diagonal matrix with the main diagonal given by vector $\mv{a}$. It can be shown that the optimal combining weights that maximize $\gamma_{k,n}$'s are obtained from the well-known MRC \cite{Goldsmith05}:
\begin{align}\label{eqn:optimal beamforming}
\mv{w}_{n}^\ast=\left({\rm diag}(\sigma_{1,n}^2,\cdots,\sigma_{M,n}^2)+{\rm diag}(q_{1,n},\cdots,q_{M,n})\right)^{-1}\mv{h}_{k,n}, ~~~ n=1,\cdots,N.
\end{align}With the above MRC receiver, $\gamma_{k,n}$ given in (\ref{eqn:SINR}) reduces to
\begin{align}\label{eqn:optimal SINR}
\gamma_{k,n}=\sum\limits_{m=1}^M\frac{|h_{m,k,n}|^2p_{k,n}}{\sigma_{m,n}^2+q_{m,n}}, ~~~ n\in \Omega_k, ~ k=1,\cdots,K.
\end{align}

\section{Quantization Schemes}\label{sec:Two Scalar Quantization Models}

The key issue to implement the quantize-and-forward scheme introduced in Section \ref{sec:System Model} is how each RRH should quantize its received signal at each SC in practice. In this section, we first study a theoretical quantization model by viewing (\ref{eqn:quantized signal}) as a test channel and derive its achievable sum-rate based on the rate-distortion theory, which can serve as a performance upper bound. Then, we investigate the practical uniform scalar quantization scheme in details, which can be easily applied at each RRH, and derive the corresponding achievable end-to-end sum-rate.

\subsection{Gaussian Test Channel}\label{sec:Gaussian Test Channel Model}

In this subsection, we assume that the quantization errors given in (\ref{eqn:quantized signal}) are Gaussian distributed, i.e., $e_{m,n}\sim\mathcal{CN}(0,q_{m,n})$, $\forall m, n$. With Gaussian quantization errors, (\ref{eqn:quantized signal}) can be viewed as a Gaussian test channel \cite{Gamal11}. As a result, to forward the received data at SC $n$, the transmission rate in RRH $m$'s fronthaul link is expressed as \cite{Gamal11}
\begin{align}\label{eqn:fronthaul link SC}
T_{m,n}^{(G)}=\frac{B}{N}\log_2\left(1+\frac{|h_{m,k,n}|^2p_{k,n}+\sigma_{m,n}^2}{q_{m,n}}\right).
\end{align}Since quantization is performed at each RRH independently,  $\{y_{m,1},\cdots,y_{m,N}\}$ can be reliably transmitted to the BBU if and only if \cite{Kramer08}
\begin{align}\label{eqn:fronthaul link}
T_m^{(G)}=\sum\limits_{n=1}^NT_{m,n}^{(G)}=\frac{B}{N}\sum\limits_{k=1}^K\sum\limits_{n\in \Omega_k} \log_2\left(1+\frac{|h_{m,k,n}|^2p_{k,n}+\sigma_{m,n}^2}{q_{m,n}}\right)\leq \bar{T}_m, ~~~ m=1,\cdots,M.
\end{align}

Next, consider the end-to-end performance of the users. With Gaussian noise in (\ref{eqn:quantized signal}), the achievable rate of user $k$ at SC $n$ is expressed as
\begin{align}
R_{k,n}^{(G)}&=\frac{B}{N}\log_2(1+\gamma_{k,n})=\frac{B}{N}\log_2\left(1+\sum\limits_{m=1}^M\frac{|h_{m,k,n}|^2p_{k,n}}{\sigma_{m,n}^2+q_{m,n}}\right)\nonumber \\ &\overset{(a)}{=}\frac{B}{N}\log_2\left(1+\sum\limits_{m=1}^M\frac{|h_{m,k,n}|^2p_{k,n}}{\sigma_{m,n}^2+\frac{|h_{m,k,n}|^2p_{k,n}+\sigma_{m,n}}{2^{NT_{m,n}^{(G)}/B}-1}}\right), \label{eqn:test channel user rate}
\end{align}where $(a)$ is obtained by substituting $q_{m,n}$ by $T_{m,n}$ according to (\ref{eqn:fronthaul link SC}). Notice that as the allocated fronthaul rate $T_{m,n}^{(G)} \rightarrow 0$ (versus $\infty$), the achievable end-to-end rate in (\ref{eqn:test channel user rate}) converges to zero (or that of the wireless link capacity). Then, the achievable throughput of all users is expressed as
\begin{align}\label{eqn:test channel rate}
R_{{\rm sum}}^{(G)} =\sum\limits_{k=1}^K\sum\limits_{n\in \Omega_k}R_{k,n}^{(G)}=\frac{B}{N}\sum\limits_{k=1}^K\sum\limits_{n\in \Omega_k} \log_2\left(1+\sum\limits_{m=1}^M\frac{|h_{m,k,n}|^2p_{k,n}}{\sigma_{m,n}^2+\frac{|h_{m,k,n}|^2p_{k,n}+\sigma_{m,n}^2}{2^{NT_{m,n}^{(G)}/B}-1}}\right).
\end{align}From (\ref{eqn:test channel rate}), it is clearly seen that the sum-rate performance depends on both the users' power allocations, $\{p_{k,n}\}$, and the RRHs' fronthaul rate allocations, $\{T_{m,n}^{(G)}\}$, over the SCs.

\subsection{Uniform Scalar Quantization}\label{sec:Uniform Quantization Model}

In practice, it is very difficult to find the quantization codebooks to achieve the throughput given in (\ref{eqn:test channel rate}) subject to the fronthaul capacity constraints given in (\ref{eqn:fronthaul link}). In this subsection, we consider using practical uniform scalar quantization technique at each RRH and derive the achievable sum-rate.

\begin{figure}
\begin{center}
\scalebox{0.5}{\includegraphics*{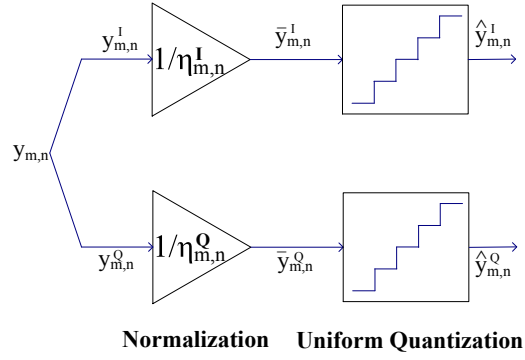}}
\end{center}\vspace{-20pt}
\caption{Schematic of uniform scalar quantization.}\label{fig3}\vspace{-20pt}
\end{figure}

A typical method to implement the uniform quantization is via separate in-phase/quadrature (I/Q) quantization, where the architecture is shown in Fig. \ref{fig3}. Specifically, the received complex symbol $y_{m,n}$ given in (\ref{eqn:received signal}) could be presented by its I and Q parts:
\begin{align}\label{eqn:inphase and Q}
y_{m,n}=y_{m,n}^I+jy_{m,n}^Q,  ~~~ \forall m,n,
\end{align}where $j^2=-1$, and the I-branch symbol $y_{m,n}^I$ and Q-branch symbol $y_{m,n}^Q$ are both real Gaussian random variables with zero mean and variance $(|h_{m,k,n}|^2p_{k,n}+\sigma_{m,n}^2)/2$. As a result, each RRH $m$ first normalizes its I-branch and Q-branch symbols at SC $n$ to $\bar{y}_{m,n}^I$ and $\bar{y}_{m,n}^Q$ by factors $\eta_{m,n}^I$ and $\eta_{m,n}^Q$, and then implements uniform scalar quantization to $\bar{y}_{m,n}^I$ and $\bar{y}_{m,n}^Q$ with $D_{m,n}$ quantization bits, separately. For conciseness, we summarize the implementation details of the uniform scalar quantization in Appendix \ref{appendix1}.

In the following, we present the end-to-end achievable throughput of all users subject to the fronthaul capacity constraints under the uniform scalar quantization technique described in Appendix \ref{appendix1}.

\begin{proposition}\label{fronthaul}
With the uniform scalar quantization scheme, the transmission rate from RRH $m$ to the BBU in its fronthaul link is given as
\begin{align}\label{eqn:fronthaul link uniform quantization}
T_m^{(U)}=\sum\limits_{n=1}^NT_{m,n}^{(U)}\leq \bar{T}_m, ~~~ m=1,\cdots,M,
\end{align}where $T_{m,n}^{(U)}$ denotes the transmission rate in RRH $m$'s fronthaul link to forward its received data at SC $n$, i.e.,
\begin{align}\label{eqn:fronthaul link uniform quantization SC}
T_{m,n}^{(U)}=\frac{2BD_{m,n}}{N}, ~~~ \forall m,n.
\end{align}
\end{proposition}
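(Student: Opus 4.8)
The plan is to establish the per-subcarrier fronthaul rate $T_{m,n}^{(U)}$ by a direct bit-counting argument, and then obtain $T_m^{(U)}$ by summing over the $N$ subcarriers together with the same ``noiseless multiplexing over one link'' reasoning that produced (\ref{eqn:fronthaul link}) in the Gaussian test-channel case. First I would count the bits generated per second by RRH $m$'s quantizer on subcarrier $n$. In the OFDMA model of Section \ref{sec:System Model} the total bandwidth $B$ is divided into $N$ subcarriers, so one OFDM symbol spans $N/B$ seconds (ignoring cyclic-prefix overhead) and the OFDM symbol rate is $B/N$ per second; equivalently each subcarrier delivers $B/N$ complex samples $y_{m,n}$ per second. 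By the architecture in Appendix \ref{appendix1} (Fig. \ref{fig3}), every such complex sample is split into its I- and Q-branch real samples, each normalized and fed into a uniform scalar quantizer using $D_{m,n}$ bits, so each $y_{m,n}$ is mapped to a binary codeword of length $2D_{m,n}$. Hence the bit rate required to convey subcarrier $n$'s quantized stream is $(B/N)\cdot 2D_{m,n}=2BD_{m,n}/N$, which is exactly (\ref{eqn:fronthaul link uniform quantization SC}).

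Next I would argue that these $N$ per-subcarrier streams must share the single noiseless fronthaul link of capacity $\bar{T}_m$. Since the scalar quantizer outputs at distinct subcarriers are produced by fixed (deterministic) bit maps and the $y_{m,n}$'s are independent across $n$, there is no cross-subcarrier source-coding gain to exploit; the BBU can losslessly recover all $\tilde{y}_{m,n}$'s if and only if the aggregate bit rate does not exceed the link capacity, i.e. $\sum_{n=1}^{N} T_{m,n}^{(U)} \le \bar{T}_m$, which is (\ref{eqn:fronthaul link uniform quantization}). This is the lossless-bit-pipe analogue of the constraint (\ref{eqn:fronthaul link}) obtained for the Gaussian test channel, and because $\bar{T}_m$ is the capacity of a noiseless channel the condition ``aggregate rate $\le \bar{T}_m$'' is simultaneously necessary and sufficient, so no converse subtlety arises beyond the trivial one.

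I do not anticipate a serious technical obstacle: the proposition is essentially bookkeeping. The only points requiring care are (i) accounting the per-subcarrier OFDM symbol rate as $B/N$ so that it correctly combines with the $D_{m,n}$ bits per complex sample to give $2BD_{m,n}/N$, and (ii) tracking the factor of $2$ that comes from performing separate I/Q quantization; both follow immediately from the system model and the construction in Appendix \ref{appendix1}. If one wanted full rigor one could additionally note that the $D_{m,n}$ output bits of a uniform quantizer are not in general equiprobable, so a further entropy-coding stage could in principle shrink the rate below $2D_{m,n}$ bits per sample; the proposition (and the rest of the paper) adopts the fixed-length $2D_{m,n}$-bit representation, which is the practically relevant quantity and gives the stated equalities.
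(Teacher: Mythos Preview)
Your proposal is correct and follows essentially the same bookkeeping argument as the paper's own proof: both compute the per-subcarrier rate as (symbol rate $B/N$) times ($2D_{m,n}$ bits per complex sample), with the paper phrasing it equivalently as two real I/Q streams each at Nyquist rate $B/N$ carrying $D_{m,n}$ bits per sample, and then sum over $n$ subject to the link capacity. Your additional remarks on the absence of cross-subcarrier coding gain and on entropy coding are not in the paper's proof but are harmless (and correct) commentary.
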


\begin{proof}
Please refer to Appendix \ref{appendix2}.
\end{proof}

\begin{proposition}\label{rate}
With the uniform scalar quantization scheme, an achievable end-to-end throughput of all users is expressed as
\begin{align}\label{eqn:uniform quantization sum-rate}
R_{{\rm sum}}^{(U)}=\sum\limits_{k=1}^K \sum\limits_{n\in \Omega_k} R_{k,n}^{(U)},\end{align}where the achievable rate of user $k$ at SC $n$ is expressed as
\begin{align}\label{eqn:new end to end rate}
R_{k,n}^{(U)}=\frac{B}{N}\log_2\left(1+\sum\limits_{m=1}^M\frac{|h_{m,k,n}|^2p_{k,n}}{\sigma_{m,n}^2+3(|h_{m,k,n}|^2p_{k,n}+\sigma_{m,n}^2)2^{-\frac{NT_{m,n}^{(U)}}{B}}}\right).
\end{align}
\end{proposition}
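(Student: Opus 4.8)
The plan is to pin down the variance of the scalar quantization error $e_{m,n}$ appearing in \eqref{eqn:quantized signal} under the uniform I/Q quantizer of Appendix~\ref{appendix1}, feed that variance into the post-MRC SNR \eqref{eqn:optimal SINR}, and then invoke the standard worst-case (Gaussian) additive-noise argument to declare the resulting rate achievable.

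\emph{Step 1 (quantization-error variance).} From \eqref{eqn:inphase and Q}, the I- and Q-branch symbols $y_{m,n}^I,y_{m,n}^Q$ are real Gaussian, zero-mean, each with variance $v_{m,n}/2$ where $v_{m,n}:=|h_{m,k,n}|^2p_{k,n}+\sigma_{m,n}^2$ is the total received power of RRH $m$ on SC $n$. In the uniform quantizer of Appendix~\ref{appendix1} each branch is loaded at three standard deviations, i.e.\ clipped to $[-3\sqrt{v_{m,n}/2},\,3\sqrt{v_{m,n}/2}]$, and this interval is split into $2^{D_{m,n}}$ equal cells, giving step size $\Delta_{m,n}=6\sqrt{v_{m,n}/2}\,/\,2^{D_{m,n}}$. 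Neglecting the overload (clipping) distortion — which the $3\sigma$ loading renders negligible — and using the usual high-resolution model in which the granular error is approximately uniform on $(-\Delta_{m,n}/2,\Delta_{m,n}/2)$ and uncorrelated with the signal, each branch contributes error variance $\Delta_{m,n}^2/12=\tfrac32 v_{m,n}2^{-2D_{m,n}}$; summing the I and Q contributions gives $q_{m,n}={\rm Var}(e_{m,n})=3v_{m,n}2^{-2D_{m,n}}$. Finally, Proposition~\ref{fronthaul} gives $2D_{m,n}=NT_{m,n}^{(U)}/B$, hence $q_{m,n}=3\bigl(|h_{m,k,n}|^2p_{k,n}+\sigma_{m,n}^2\bigr)2^{-NT_{m,n}^{(U)}/B}$. (The independence of $e_{m,n}$ over $m$ and $n$ needed below is already asserted in the text following \eqref{eqn:quantized signal}.)

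\emph{Step 2 (SNR and achievable rate).} Substituting this $q_{m,n}$ into \eqref{eqn:optimal SINR} yields
\[
\gamma_{k,n}=\sum_{m=1}^M\frac{|h_{m,k,n}|^2p_{k,n}}{\sigma_{m,n}^2+3\bigl(|h_{m,k,n}|^2p_{k,n}+\sigma_{m,n}^2\bigr)2^{-NT_{m,n}^{(U)}/B}},\qquad n\in\Omega_k .
\]
In \eqref{eqn:beamforming} the input $s_{k,n}$ is Gaussian and the aggregate disturbance $\mv{w}_n^H\mv{z}_n+\mv{w}_n^H\mv{e}_n$ is zero-mean with variance equal to the denominator of \eqref{eqn:SINR}; since for a Gaussian input the Gaussian additive noise minimizes the mutual information among all additive noises of a given variance, $I(s_{k,n};\hat s_{k,n})\ge\log_2(1+\gamma_{k,n})$ per channel use. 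Scaling by the per-SC bandwidth $B/N$ gives the achievable rate $R_{k,n}^{(U)}$ in \eqref{eqn:new end to end rate}, and since the SCs are orthogonal, summing over $n\in\Omega_k$ and over $k\in\mathcal K$ yields \eqref{eqn:uniform quantization sum-rate}.

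The delicate point is Step~1: justifying the additive white-noise model for the uniform quantizer, namely that $e_{m,n}$ may be taken with variance $\Delta_{m,n}^2/12$, uncorrelated with $y_{m,n}$, and with the overload term dropped. This is the classical high-resolution (or subtractively dithered) quantization approximation; what makes the constant $3$ emerge cleanly is precisely the $3\sigma$ loading rule and the consistent choice of the normalization factors $\eta_{m,n}^I,\eta_{m,n}^Q$ in Appendix~\ref{appendix1}, so that the per-branch ratio of error power to signal power is exactly $3\cdot 2^{-2D_{m,n}}$. Once that variance is fixed, Step~2 is routine given \eqref{eqn:optimal SINR} and the worst-case-noise lemma.
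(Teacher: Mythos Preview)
Your proposal is correct and follows essentially the same route as the paper's proof: compute the per-branch uniform-quantization error variance from the $3\sigma$ loading and the $2^{D_{m,n}}$ levels (the paper phrases this via Widrow's theorem and the normalized step $\Delta_{m,n}=2^{1-D_{m,n}}$ scaled back by $\eta_{m,n}$, arriving at the identical $q_{m,n}=3(|h_{m,k,n}|^2p_{k,n}+\sigma_{m,n}^2)2^{-2D_{m,n}}$), substitute $2D_{m,n}=NT_{m,n}^{(U)}/B$, plug into \eqref{eqn:optimal SINR}, and invoke the worst-case Gaussian-noise bound to certify achievability. The only cosmetic difference is that the paper additionally remarks (via a central-limit heuristic on $\mv{w}_n^H\mv{e}_n$) that the effective noise is close to Gaussian, whereas you rely solely on the worst-case-noise inequality; your argument is the one that actually delivers the lower bound, so nothing is lost.
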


\begin{proof}
Please refer to Appendix \ref{appendix3}.
\end{proof}

Notice that (\ref{eqn:new end to end rate}) holds when $T_{m,n}^{(U)} \geq (2B)/N$ (i.e., $D_{m,n} \geq 1$) according to (\ref{eqn:fronthaul link uniform quantization SC}). Similar to (\ref{eqn:test channel rate}) for the ideal case of Gaussian compression, the sum-rate in (\ref{eqn:uniform quantization sum-rate}) with the uniform scalar quantization also jointly depends on both the users' power allocations, $\{p_{k,n}\}$, and the RRHs' fronthaul rate allocations, $\{T_{m,n}^{(U)}\}$, over the SCs. Furthermore, given the same set of power and fronthaul rate allocations, the achievable rate in (\ref{eqn:uniform quantization sum-rate}) is always strictly less than that in (\ref{eqn:test channel rate}) provided that $T_{m,n}^{(G)}=T_{m,n}^{(U)}\geq (2B)/N$, $\forall m, n$.

\section{Problem Formulation}\label{sec:Problem Formulation}

In this paper, given the wireless bandwidth $B$, each user $k$'s SC allocation $\Omega_k$'s as well as transmit power constraint $\bar{P}_k$'s, and each RRH $m$'s fronthaul link capacity $\bar{T}_m$'s, we aim to maximize the end-to-end throughput of all the users subject to each RRH's fronthaul link capacity constraint by jointly optimizing the wireless power control and fronthaul rate allocation. Specifically, for the benchmark scheme, i.e., the theoretical Gaussian test channel based scheme in Section \ref{sec:Gaussian Test Channel Model}, we are interested in solving the following problem.
\begin{align*}\mathrm{(P1)}:~\mathop{\mathtt{Maximize}}_{\{p_{k,n},T_{m,n}^{(G)}\}} & ~~~ R_{{\rm sum}}^{(G)}  \\
\mathtt {Subject \ to} & ~~~ \sum\limits_{n=1}^NT_{m,n}^{(G)}\leq \bar{T}_m, ~~~ \forall m \in \mathcal{M}, \\ & ~~~ \sum\limits_{n\in \Omega_k}p_{k,n}\leq \bar{P}_k, ~~~ \forall k \in \mathcal{K},
\end{align*}where $R_{{\rm sum}}^{(G)}$ is given in (\ref{eqn:test channel rate}) and $T_{m,n}^{(G)}$ is given in (\ref{eqn:fronthaul link SC}). Furthermore, for the proposed uniform scalar quantization based scheme in Section \ref{sec:Uniform Quantization Model}, we are interested in solving the following problem.
\begin{align*}\mathrm{(P2)}:~\mathop{\mathtt{Maximize}}_{\{p_{k,n},T_{m,n}^{(U)}\}} & ~~~ R_{{\rm sum}}^{(U)} \\
\mathtt {Subject \ to} & ~~~ \sum\limits_{n=1}^N T_{m,n}^{(U)}\leq \bar{T}_m, ~~~ \forall m \in \mathcal{M}, \\ & ~~~ \sum\limits_{n\in \Omega_k} p_{k,n}\leq \bar{P}_k, ~~~ \forall k \in \mathcal{K}, \\ & ~~~ T_{m,n}^{(U)} =\frac{2BD_{m,n}^{(U)}}{N}, ~ D_{m,n}\in\{1,2,\cdots\}  ~ {\rm is \ an \ integer}, ~ \forall m \in \mathcal{M}, ~ \forall n \in \mathcal{N},
\end{align*}where $R_{{\rm sum}}^{(U)}$ is given in (\ref{eqn:uniform quantization sum-rate}) and $T_{m,n}^{(U)}$ is given in (\ref{eqn:fronthaul link uniform quantization SC}).

Recall that with the same rate allocations in the fronthaul links for the two schemes, i.e., $T_{m,n}^{(G)}=T_{m,n}^{(U)}\geq (2B)/N$, $\forall m,n$, $R_{{\rm sum}}^{(G)}$ in (\ref{eqn:test channel rate}) is always larger than $R_{{\rm sum}}^{(U)}$ given in (\ref{eqn:uniform quantization sum-rate}). Furthermore, uniform scalar quantization requires that the fronthaul rate allocated at each SC must be an integer multiplication of $(2B)/N$. Due to the above two reasons, in general the optimal value of problem (P2) is smaller than that of problem (P1), i.e., $R_{{\rm sum}}^{(U)}<R_{{\rm sum}}^{(G)}$. It is also worth noting that user association is also determined from solving problems (P1) and (P2), since if with the obtained solution we have $T_{m,n}=0$, $\forall n\in \Omega_k$, RRH $m$ will not quantize and forward user $k$'s signal to the BBU for decoding, or equivalently RRH $m$ does not serve that user at all.

%First, we analyze the performance loss due to the higher quantization noise levels by ignoring the integer constraints in problem (P2-Eqv), which is denoted by problem (P2-Eqv-NoInt). Let $\{p_{k,n}^\ast\}$ and $\{q_{m,n}^\ast\}$ denote the optimal solution to problem (P1). Then, we have the following proposition.
%
%\begin{proposition}\label{proposition4}
%$\{\tilde{p}_{k,n}=p_{k,n}^\ast\}$ and $\{\tilde{q}_{m,n}=3q_{m,n}^\ast\}$ is a feasible solution to problem (P2-Eqv-NoInt). Furthermore, with $\{\tilde{p}_{k,n}\}$ and $\{\tilde{q}_{m,n}\}$, the objective value of problem (P2-Eqv-NoInt) is within $\log_23$ bps/Hz to the optimal value of problem (P1).
%\end{proposition}
%
%\begin{proof}
%Please refer to Appendix \ref{appendix6}.
%\end{proof}
%
%Proposition \ref{proposition4} implies that without the integer constraints, the worst-cast performance gap of the practical uniform quantization based scheme to the theoretical Gaussian test channel based scheme is $\log_2 3$ bps/Hz. As to the integer constraints, it is difficult to obtain analytical results on the performance loss due to them. However, as will be shown later in enormous numerical examples, via a proper selection of integer values of $D_{m,n}$'s, the performance loss to problem (P2) due to the integer constraints is in general negligible. To summarize, the proposed implementable uniform quantization based scheme is very powerful since it approaches the theoretical performance upper bound of the Gaussian test channel based scheme.

It can be also observed that both problems (P1) and (P2) are non-convex since their objective functions are not concave over $p_{k,n}$'s and $T_{m,n}$'s; thus, it is difficult to obtain their optimal solutions in general. In the following two sections, we first study the special case of problems (P1) and (P2) with one user and one RRH to shed some light on the mutual influence between the wireless power allocation and fronthaul rate allocation, and then propose efficient algorithms to solve problems (P1) and (P2) for the general case of multiple users and multiple RRHs.

\section{Special Case: Single User and Single RRH}\label{sec:Special Case: Single User and Single RRH}

In this section, we study problems (P1) and (P2) for the special case of $K=1$ and $M=1$. For convenience, in the rest of this section we omit the subscripts of $k$ and $m$ in all the notations in problems (P1) and (P2).

\subsection{Gaussian Test Channel}\label{sec:power control and fronthaul rate allocation for the case of one user and one RRH}

It can be shown that problem (P1) is still a non-convex problem for the case of $K=1$ and $M=1$. In this subsection, we propose to apply the alternating optimization technique to solve this problem. Specifically, first we fix the fronthaul rate allocation $T_n^{(G)}=\hat{T}_n^{(G)}$'s in problem (P1) and optimize the wireless power allocation by solving the following problem.
\begin{align}\mathop{\mathtt{Maximize}}_{\{p_n\}} & ~~~ \frac{1}{N}\sum\limits_{n=1}^N\log_2\left(1+\frac{|h_n|^2p_n}{\sigma_n^2+\frac{|h_n|^2p_n+\sigma_n^2}{2^{N\hat{T}_n^{(G)}/B}-1}}\right) \nonumber \\
\mathtt {Subject \ to} & ~~~ \sum\limits_{n=1}^Np_n \leq \bar{P}. \label{eqn:p2}
\end{align}Let $\{\hat{p}_n\}$ denote the optimal solution to problem (\ref{eqn:p2}). Next, we fix the wireless power allocation $p_n=\hat{p}_n$'s in problem (P1) and optimize the fronthaul rate allocation by solving the following problem.
\begin{align}\mathop{\mathtt{Maximize}}_{\{T_n^{(G)}\}} & ~~~ \frac{1}{N}\sum\limits_{n=1}^N\log_2\left(1+\frac{|h_n|^2\hat{p}_n}{\sigma_n^2+\frac{|h_n|^2\hat{p}_n+\sigma_n^2}{2^{NT_n^{(G)}/B}-1}}\right) \nonumber \\
\mathtt {Subject \ to} & ~~~ \sum\limits_{n=1}^N T_n^{(G)} \leq \bar{T}. \label{eqn:p3}
\end{align}Let $\{\hat{T}_n^{(G)}\}$ denote the optimal solution to problem (\ref{eqn:p3}). The above update of $\{p_n\}$ and $\{T_n^{(G)}\}$ is iterated until convergence. In the following, we show how to solve problems (\ref{eqn:p2}) and (\ref{eqn:p3}), respectively.

First, it can be shown that the objective function of problem (\ref{eqn:p2}) is concave over $p_n$'s. As a result, problem (\ref{eqn:p2}) is a convex problem, and thus can be efficiently solved by the Lagrangian duality method \cite{Boyd04}. We then have the following proposition.

\begin{proposition}\label{proposition1}
The optimal solution to problem (\ref{eqn:p2}) is expressed as
\begin{align}\label{eqn:opt1}
\hat{p}_n=\left\{\begin{array}{ll}\frac{-\alpha_n+\sqrt{\alpha_n^2-4\eta_n}}{2}, & {\rm if} ~ \frac{|h_n|^2}{\sigma_n^2}>f_n(\hat{T}_n^{(G)}), \\ 0, & {\rm otherwise}.
\end{array}\right. ~~~ n=1,\cdots,N, \end{align}where \begin{align}
& \alpha_n=\frac{\sigma_n^2(2^{\frac{N\hat{T}_n^{(G)}}{B}}+1)}{|h_n|^2}, \label{eqn:alpha}\\
& \eta_n=\frac{\sigma_n^42^{\frac{N\hat{T}_n^{(G)}}{B}}}{|h_n|^4}-\frac{\sigma_n^2(2^{\frac{N\hat{T}_n^{(G)}}{B}}-1)}{\lambda N|h_n|^2\ln 2}, \label{eqn:eta} \\ & f_n(\hat{T}_n^{(G)})=\frac{2^{\frac{N\hat{T}_n^{(G)}}{B}}\lambda N \ln 2}{2^{\frac{N\hat{T}_n^{(G)}}{B}}-1}, \label{eqn:fn}
\end{align}and $\lambda$ is a constant under which $\sum_{n=1}^N\hat{p}_n=\bar{P}_n$.
\end{proposition}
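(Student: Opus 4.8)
The plan is to use the concavity of the objective of problem~(\ref{eqn:p2}) noted above, so that the problem is a convex program with a single linear sum-power constraint and the implicit constraints $p_n\ge 0$, and to characterize its global optimum via the KKT conditions (which are both necessary and sufficient here). I would attach a multiplier $\lambda\ge 0$ to $\sum_n p_n\le\bar P$ and form the Lagrangian
\[
\mathcal{L}\bigl(\{p_n\},\lambda\bigr)=\frac1N\sum_{n=1}^N\log_2\!\left(1+\frac{|h_n|^2 p_n}{\sigma_n^2+\frac{|h_n|^2 p_n+\sigma_n^2}{2^{N\hat{T}_n^{(G)}/B}-1}}\right)-\lambda\Bigl(\sum_{n=1}^N p_n-\bar P\Bigr).
\]
Since the objective is separable over the SCs, the stationarity conditions decouple: for an active SC ($\hat p_n>0$) the per-SC derivative equals $N\lambda$, while for an inactive SC ($\hat p_n=0$) the per-SC derivative at the origin is at most $N\lambda$.

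The enabling step is algebraic. Writing $c_n:=2^{N\hat{T}_n^{(G)}/B}>1$, a short manipulation gives
\[
1+\frac{|h_n|^2 p_n}{\sigma_n^2+\frac{|h_n|^2 p_n+\sigma_n^2}{c_n-1}}=\frac{c_n\bigl(|h_n|^2 p_n+\sigma_n^2\bigr)}{|h_n|^2 p_n+\sigma_n^2 c_n},
\]
so the per-SC rate is $\log_2 c_n+\log_2(|h_n|^2 p_n+\sigma_n^2)-\log_2(|h_n|^2 p_n+\sigma_n^2 c_n)$ and its derivative is $\frac{1}{\ln 2}\bigl(\frac{|h_n|^2}{|h_n|^2 p_n+\sigma_n^2}-\frac{|h_n|^2}{|h_n|^2 p_n+\sigma_n^2 c_n}\bigr)$. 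Setting $\frac1N$ of this equal to $\lambda$ and clearing denominators turns the stationarity condition into the quadratic $p_n^2+\alpha_n p_n+\eta_n=0$ with $\alpha_n$ and $\eta_n$ exactly as in~(\ref{eqn:alpha})--(\ref{eqn:eta}); solving it and discarding the root $\tfrac{-\alpha_n-\sqrt{\alpha_n^2-4\eta_n}}{2}$ (which is negative since $\alpha_n>0$) yields the closed form in~(\ref{eqn:opt1}).

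For the threshold I would evaluate the per-SC derivative at $p_n=0$, which equals $\frac{|h_n|^2}{\sigma_n^2\ln 2}\cdot\frac{c_n-1}{c_n}$; the KKT condition ``$\frac1N$ times this $\le\lambda$'' then reads $\frac{|h_n|^2}{\sigma_n^2}\le\frac{c_n\lambda N\ln 2}{c_n-1}=f_n(\hat{T}_n^{(G)})$, so $\hat p_n=0$ precisely when the channel-to-noise ratio is below $f_n(\hat{T}_n^{(G)})$ and $\hat p_n>0$ otherwise. The consistency check I expect to be the most delicate part is verifying that the strict inequality $|h_n|^2/\sigma_n^2>f_n(\hat{T}_n^{(G)})$ is exactly equivalent to $\eta_n<0$, which in turn forces the discriminant $\alpha_n^2-4\eta_n>\alpha_n^2>0$ (so the roots are real) and makes the retained root strictly positive --- hence the two branches of~(\ref{eqn:opt1}) are precisely those dictated by KKT, with no left-over boundary case. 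Finally, because each per-SC rate is strictly increasing in $p_n$ (as $c_n>1$), the power constraint must bind at the optimum, giving $\lambda>0$; and since $\sum_n\hat p_n(\lambda)$ is continuous and strictly decreasing in $\lambda$, there is a unique such $\lambda$, obtainable by bisection. Everything other than the SINR manipulation and this threshold/$\eta_n$ equivalence is the standard water-filling-type KKT bookkeeping.
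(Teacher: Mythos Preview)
Your proposal is correct and follows essentially the same route as the paper's proof: form the Lagrangian with a single multiplier $\lambda$ on the power constraint, decouple across SCs, differentiate, reduce the stationarity condition to the quadratic $p_n^2+\alpha_np_n+\eta_n=0$, and determine $\lambda$ by bisection so that the power constraint binds. Your explicit verification that $|h_n|^2/\sigma_n^2>f_n(\hat T_n^{(G)})$ is equivalent to $\eta_n<0$ (hence to a real, strictly positive retained root) is a useful consistency check that the paper leaves implicit.
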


\begin{proof}
Please refer to Appendix \ref{appendix4}.
%Due to the space limitation, the proof is omitted here. Please refer to Appendix A in our online technical report \cite{Liu14}.
\end{proof}

It can be shown that as $\hat{T}_n^{(G)}$'s go to infinity, i.e., the case without fronthaul link constraint in problem (P1), the optimal power allocation given in (\ref{eqn:opt1}) reduces to
\begin{align}\label{eqn:water filling}
\hat{p}_n=\left\{\begin{array}{ll}\frac{1}{\lambda N\ln 2}-\frac{\sigma_n^2}{|h_n|^2}, & {\rm if} ~ \frac{|h_n|^2}{\sigma_n^2}>\lambda N\ln 2, \\ 0, & {\rm otherwise}, \end{array} \right. ~~~ n=1,\cdots,N,
\end{align}which is consistent with the conventional water-filling based power allocation. In the following, we discuss about the impact of fronthaul rate allocation on the optimal power allocation given in (\ref{eqn:opt1}) with finite values of $\hat{T}_n^{(G)}$'s.

%\begin{figure}
%\begin{center}
%\scalebox{0.6}{\includegraphics*{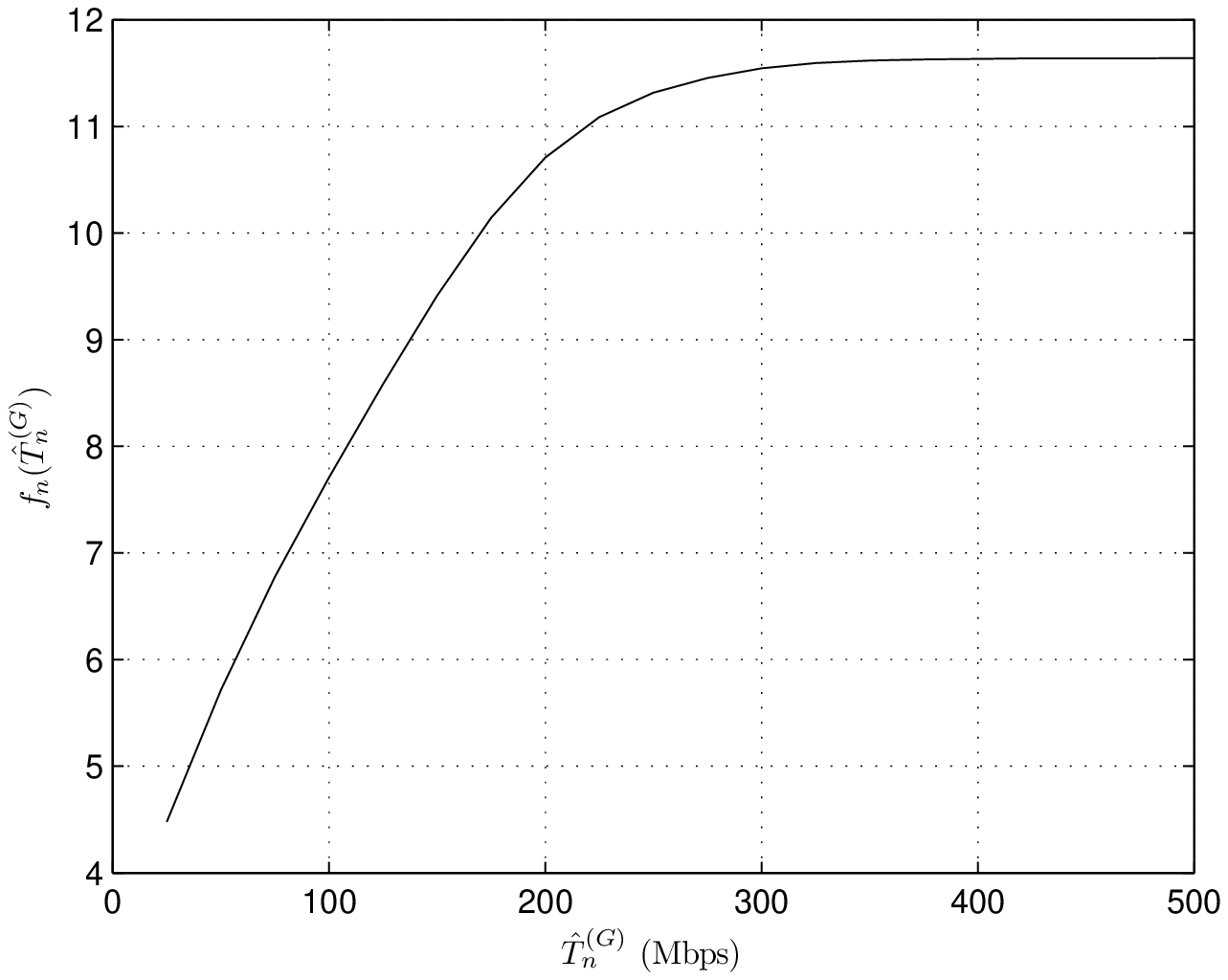}}
%\end{center}
%\caption{A plot of $f_n(\hat{T}_n^{(G)})$ over $\hat{T}_n^{(G)}$.}\label{fig10} \vspace{-15pt}
%\end{figure}

\begin{figure}
\begin{center}
\subfigure[A plot of $f_n(\hat{T}_n^{(G)})$ over $\hat{T}_n^{(G)}$]
{\scalebox{0.5}{\includegraphics*{threshold.eps}}}
\subfigure[A plot of $\hat{p}_n$'s over $\hat{T}_n^{(G)}$]
{\scalebox{0.5}{\includegraphics*{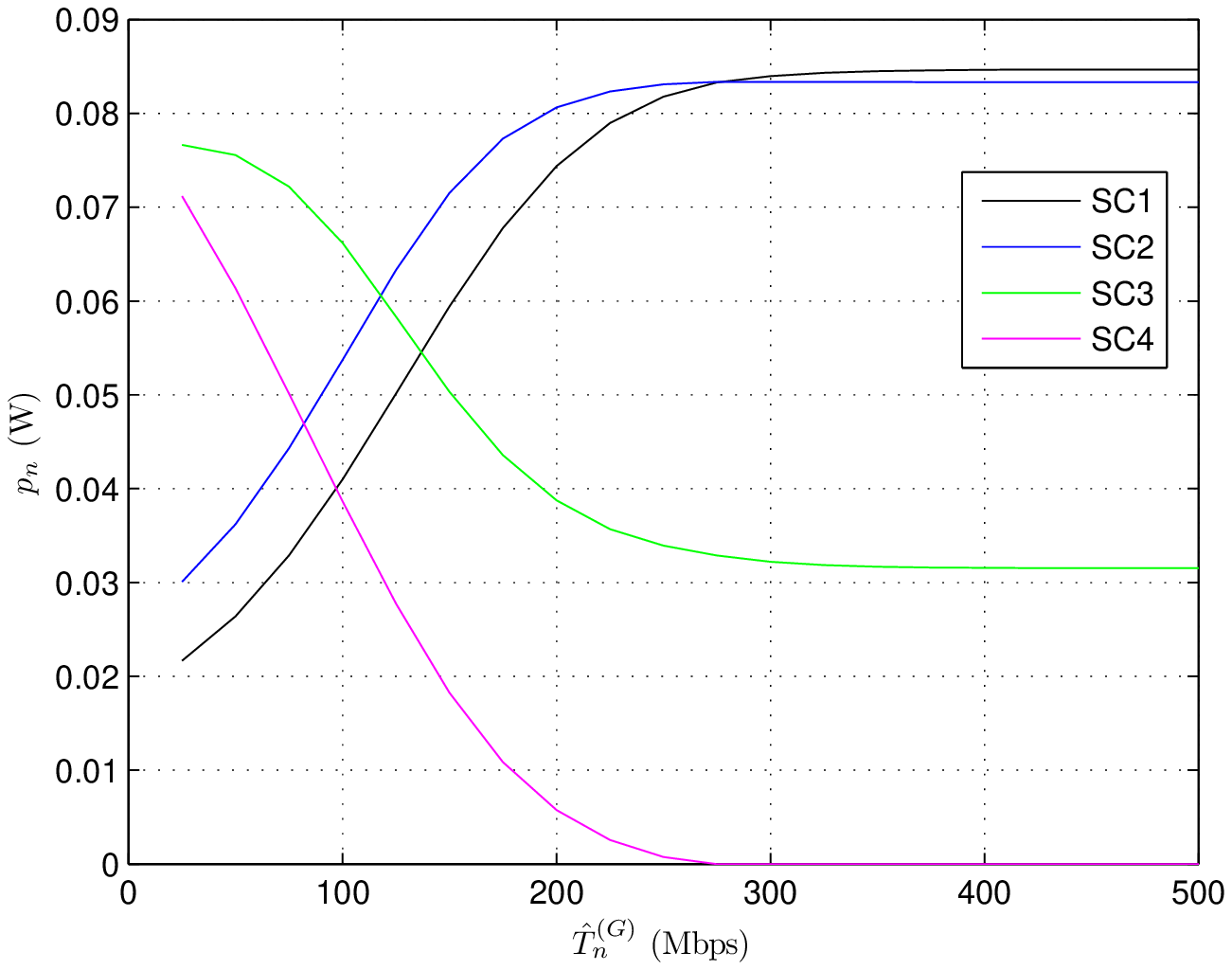}}}
\end{center}\vspace{-15pt}
\caption{Threshold-based power allocation.}\label{fig10} \vspace{-20pt}
\end{figure}

It can be observed from (\ref{eqn:opt1}) that the optimal wireless power allocation with given $\hat{T}_n^{(G)}$'s is threshold-based. In the following, we give a numerical example to investigate the monotonicity of the threshold $f_n(\hat{T}_n^{(G)})$ over $\hat{T}_n^{(G)}$, $\forall n$ (note that in (\ref{eqn:fn}) $\lambda$ is also a function of $\hat{T}_n^{(G)}$'s). In this example, the bandwidth of the wireless link is assumed to be $B=100$MHz, which is equally divided into $4$ SCs. The channel powers are given as $|h_1|^2=1.276\times 10^{-9}$, $|h_2|^2=6.12\times 10^{-10}$, $|h_3|^2=2.9\times 10^{-11}$, $|h_4|^2=1.8\times 10^{-11}$. Moreover, the power spectral density of the background noise is assumed to be $-169$dBm/Hz, and the noise figure due to receiver processing is $7$dB. The transmit power of the user is $23$dBm. It is further assumed that the fronthaul rates are equally allocated among SCs, i.e., $\hat{T}_n^{(G)}=\bar{T}/4$, $\forall n$, and thus $f_n(\hat{T}_n^{(G)})$'s are of the same value. Fig. \ref{fig10} (a) shows the plot of $f_n(\hat{T}_n^{(G)})$ versus $\hat{T}_n^{(G)}$ by increasing the value of $\bar{T}$ in problem (\ref{eqn:p2}). It is observed in this particular setup (and many others used in our simulations for which the results are not shown here due to the space limitation) that in general $f_n(\hat{T}_n^{(G)})$ is increasing with $\hat{T}_n^{(G)}$. This implies that as $\hat{T}_n^{(G)}$ increases, more SCs with weaker channel powers tend to be shut down. The reason is as follows. The dynamic range of the received signal at the SC with stronger channel power is larger, and thus with equal $\hat{T}_n^{(G)}$'s, the corresponding quantization noise level is also larger. When $\hat{T}_n^{(G)}$'s are small, quantization noise dominates the end-to-end rate performance and thus the relatively small quantization noise level at the SC with weaker channel power may offset the loss due to the poor channel condition. However, as $\hat{T}_n^{(G)}$ increases, the quantization noise becomes smaller, until the wireless link dominates the end-to-end performance. In this case, we should shut down some SCs with poor channel conditions just as water-filling based power allocation given in (\ref{eqn:water filling}).

%\begin{figure}
%\begin{center}
%\scalebox{0.6}{\includegraphics*{power_vs_fronthaul.eps}}
%\end{center}
%\caption{A plot of $\hat{p}_n$'s over $\hat{T}_n^{(G)}$.}\label{fig11} \vspace{-15pt}
%\end{figure}

To verify the above analysis, Fig. \ref{fig10} (b) shows the optimal power allocation among the $4$ SCs versus different values of $\hat{T}_n^{(G)}=\bar{T}/4$ in the above numerical example. It is observed that when $\bar{T}_n^{(G)}$ is small, in general the SCs with poorer channel conditions are allocated higher transmit power since the quantization noise levels are small at these SCs. As $\hat{T}_n^{(G)}$ increases, the SCs with poorer channels are allocated less and less transmit power. Specially, when $\hat{T}_n^{(G)}\geq 252.5$Mbps or $\bar{T}\geq 1.1$Gbps, SC 4 with the poorest channel condition is shut down for transmission. It is also observed that when $\hat{T}_n^{(G)}$ is sufficiently large such that the quantization noise is negligible, the power allocation converges to the water-filling based solution given in (\ref{eqn:water filling}).

Next, similar to problem (\ref{eqn:p2}), it can be shown that problem (\ref{eqn:p3}) is a convex problem and thus can be efficiently solved by the Lagrangian duality method. We then have the following proposition.

\begin{proposition}\label{proposition2}
The optimal solution to problem (\ref{eqn:p3}) can be expressed as
\begin{align}\label{eqn:opt2}
\hat{T}_n^{(G)}=\left\{\begin{array}{ll}\frac{B}{N}\log_2\frac{1-\beta B}{\beta B}+\frac{B}{N}\log_2 \nu(n), & {\rm if} ~ \nu_n>\frac{\beta B}{1-\beta B}, \\ 0, & {\rm otherwise},\end{array} \right. ~~~ n=1,\cdots,N,
\end{align}where
\begin{align}
\nu_n=\frac{|h_n|^2\hat{p}_n}{\sigma_n^2},
\end{align}and $\beta<\frac{1}{B}$ is a constant under which $\sum_{n=1}^N\hat{T}_n^{(G)}=\bar{T}$.
\end{proposition}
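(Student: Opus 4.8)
The plan is to solve problem (\ref{eqn:p3}) as a convex program via its KKT conditions, in parallel to the treatment of problem (\ref{eqn:p2}). Write $a_n=|h_n|^2\hat{p}_n$ and $b_n=\sigma_n^2$. The first step is to rewrite the $n$-th summand of the objective in the more transparent form
\[
\log_2\!\left(1+\frac{a_n}{b_n+\frac{a_n+b_n}{2^{NT_n^{(G)}/B}-1}}\right)=\log_2(a_n+b_n)+\frac{NT_n^{(G)}}{B}-\log_2\!\left(b_n2^{NT_n^{(G)}/B}+a_n\right).
\]
Here the first two terms are affine in $T_n^{(G)}$, while $T_n^{(G)}\mapsto\log_2\!\left(b_n2^{NT_n^{(G)}/B}+a_n\right)$ is convex (a log-sum-exp composed with an affine map), so each summand is concave; since the constraint is linear, problem (\ref{eqn:p3}) is convex and strong duality holds.

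Next I would attach a multiplier $\beta\ge 0$ to $\sum_n T_n^{(G)}\le\bar{T}$ while keeping the constraints $T_n^{(G)}\ge 0$ explicit. From the rewritten objective, $\tfrac{\partial}{\partial T_n^{(G)}}\big(\tfrac1N\sum_j(\cdot)\big)=\tfrac1B\cdot\tfrac{a_n}{b_n2^{NT_n^{(G)}/B}+a_n}$, which is strictly positive and strictly decreasing in $T_n^{(G)}$, with maximum value $\tfrac1B\cdot\tfrac{\nu_n}{1+\nu_n}<\tfrac1B$ attained at $T_n^{(G)}=0$. The KKT stationarity condition for an SC with $\hat{T}_n^{(G)}>0$ is $\tfrac{a_n}{b_n2^{N\hat{T}_n^{(G)}/B}+a_n}=\beta B$, which rearranges to $2^{N\hat{T}_n^{(G)}/B}=\tfrac{1-\beta B}{\beta B}\,\nu_n$ and hence to the nonzero branch of (\ref{eqn:opt2}). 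This value exceeds $1$ (and is therefore feasible) precisely when $\nu_n>\tfrac{\beta B}{1-\beta B}$; for the remaining SCs the stationarity value is $\le 0$, so concavity together with $T_n^{(G)}\ge 0$ forces $\hat{T}_n^{(G)}=0$, giving the other branch. Because the per-SC derivative never reaches $\tfrac1B$ we get $\beta<\tfrac1B$, and because it is strictly positive on every active SC the fronthaul constraint must bind by complementary slackness, i.e. $\sum_n\hat{T}_n^{(G)}=\bar{T}$; substituting (\ref{eqn:opt2}) yields a scalar fixed-point equation that pins down $\beta$ uniquely (its left-hand side is continuous and monotonically decreasing in $\beta$ over the range where it is positive).

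The proof contains no genuinely hard step; the points that need care are (i) carrying out the algebraic rewriting of the objective so that concavity is visible from the log-sum-exp structure, (ii) propagating the non-negativity constraints $T_n^{(G)}\ge 0$ through the KKT system to extract the threshold (rather than allowing negative allocations), and (iii) checking $0<\beta<\tfrac1B$ so that $\tfrac{1-\beta B}{\beta B}$ is positive and (\ref{eqn:opt2}) is well defined. Once these are settled the proposition follows immediately.
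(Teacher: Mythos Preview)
Your proposal is correct and follows essentially the same Lagrangian/KKT route as the paper: attach a multiplier $\beta$ to the fronthaul constraint, decouple over SCs, set the per-SC derivative to zero to obtain the nonzero branch of (\ref{eqn:opt2}), and then argue $\beta<1/B$ and tightness of the capacity constraint. The one cosmetic difference is that you first rewrite each summand as $\log_2(a_n+b_n)+\tfrac{NT_n^{(G)}}{B}-\log_2\!\big(b_n2^{NT_n^{(G)}/B}+a_n\big)$ to expose concavity via a log-sum-exp structure, whereas the paper simply asserts convexity and differentiates the original expression directly; your derivative $\tfrac{1}{B}\cdot\tfrac{a_n}{b_n2^{NT_n^{(G)}/B}+a_n}$ is algebraically identical to the paper's $\tfrac{1}{B}-\tfrac{\sigma_n^2 2^{NT_n^{(G)}/B}}{B(|h_n|^2\hat p_n+\sigma_n^2 2^{NT_n^{(G)}/B})}$ after combining fractions.
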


\begin{proof}
Please refer to Appendix \ref{appendix5}.
%Please refer to Appendix B in our online technical report \cite{Liu14}.
\end{proof}

Similar to the optimal power allocation given in (\ref{eqn:opt1}), it can be inferred from Proposition \ref{proposition2} that the optimal fronthaul rate allocation with given $\hat{p}_n$'s is also threshold-based. If the received signal SNR, $\nu_n$, at SC $n$ is below the threshold $\beta B/(1-\beta B)$, the RRH should not quantize and forward the signal at this SC to the BBU for decoding. On the other hand, if $\nu_n>\beta B/(1-\beta B)$, more quantization bits should be allocated to the SCs with higher values of $\nu_n$'s.

After problems (\ref{eqn:p2}) and (\ref{eqn:p3}) are solved by Propositions \ref{proposition1} and \ref{proposition2}, we are ready to propose the overall algorithm to solve problem (P1), which is summarized in Table \ref{table1}. It can be shown that a monotonic convergence can be guaranteed for Algorithm \ref{table1} since the objective value of problem (P1) is increased after each iteration and it is practically bounded.

\begin{table}[htp]
\begin{center}
\caption{\textbf{Algorithm \ref{table1}}: Algorithm for Problem (P1) when $K=1$ and $M=1$} \vspace{0.2cm}
 \hrule
\vspace{0.2cm}
\begin{itemize}
\item[1.] Initialize: Set $T_n^{(G,0)}=\frac{\bar{T}}{N}$, $\forall n$, $R^{(0)}=0$, and $i=0$;
\item[2.] Repeat
\begin{itemize}
\item[a.] $i=i+1$;
\item[b.] Update $\{p_n^{(i)}\}$ by solving problem (\ref{eqn:p2}) with $\hat{T}_{n}^{(G)}=T_n^{(G,i-1)}$, $\forall n$, according to Proposition \ref{proposition1};
\item[c.] Update $\{T_n^{(G,i)}\}$ by solving problem (\ref{eqn:p3}) with $\hat{p}_n=p_n^{(i)}$, $\forall n$, according to Proposition \ref{proposition2};
\end{itemize}
\item[3.] Until $R^{(i)}-R^{(i-1)}\leq \varepsilon$, where $R^{(i)}$ denotes the objective value of problem (P1) achieved by $\{p_n^{(i)}\}$ and $\{T_n^{(G,i)}\}$, and $\varepsilon$ is a small value to control the accuracy of the algorithm.
\end{itemize}
\vspace{0.2cm} \hrule \label{table1}
\end{center} \vspace{-15pt}
\end{table}

\begin{figure}
\begin{center}
\subfigure[Power Allocation]
{\scalebox{0.5}{\includegraphics*{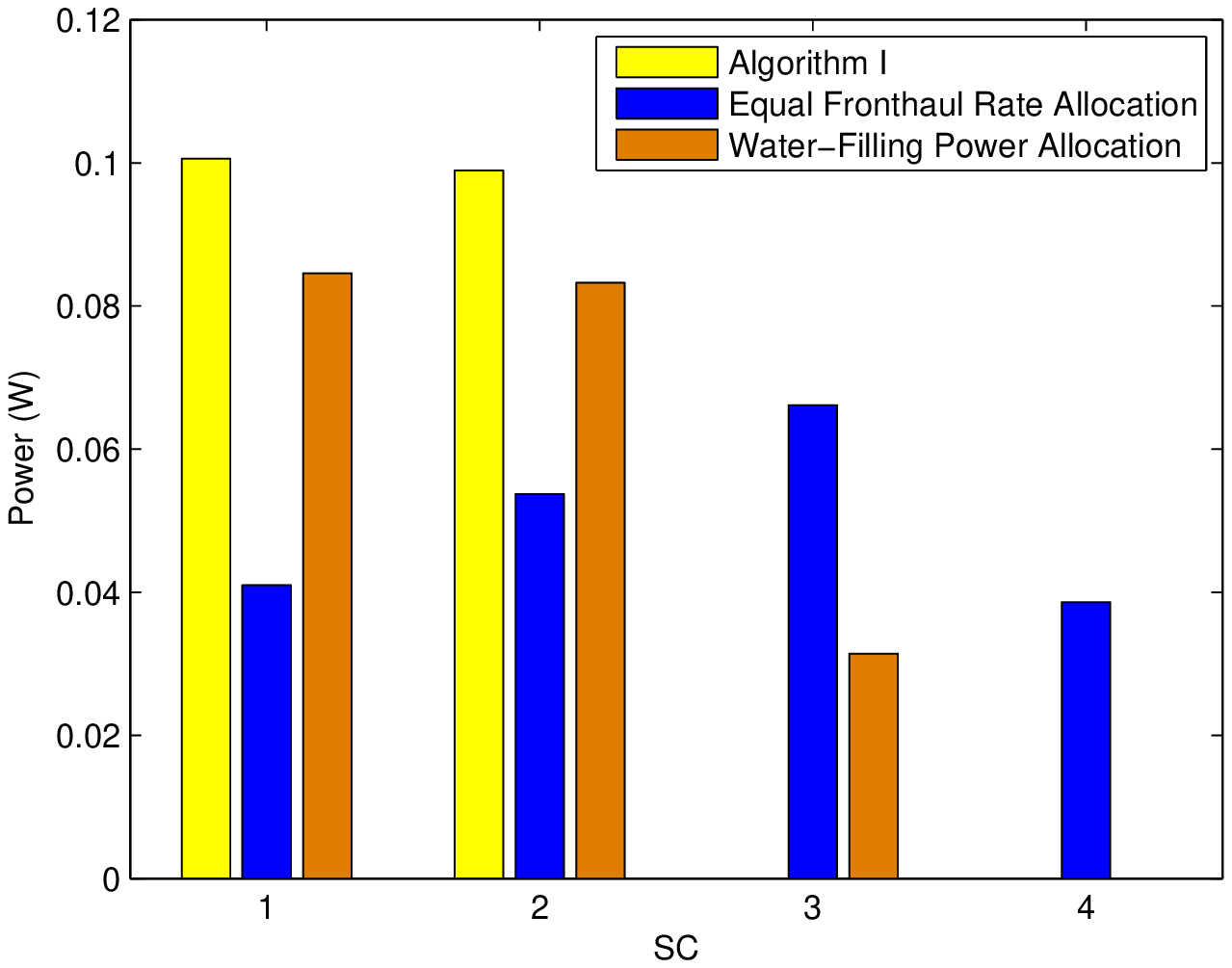}}}
\subfigure[Fronthaul Rate Allocation]
{\scalebox{0.5}{\includegraphics*{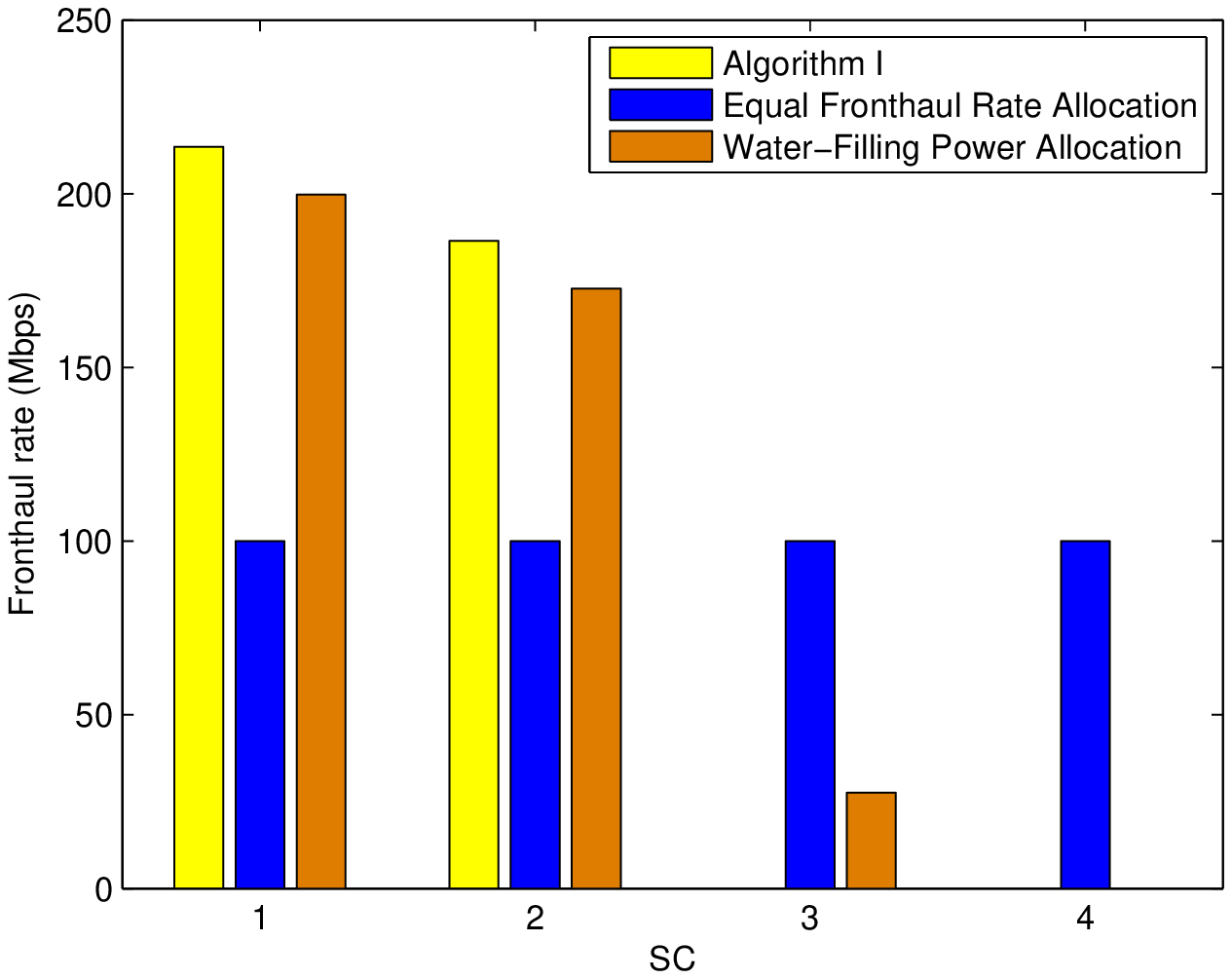}}}
\end{center}\vspace{-15pt}
\caption{Power and fronthaul rate allocation among $4$ SCs.}\label{fig12} \vspace{-20pt}
\end{figure}

With the proposed Algorithm I to solve (P1), we provide a numerical example to analyze the properties of the resulting wireless power and fronthaul rate allocation among SCs. The setup of this example is the same as that for Fig. \ref{fig10}, while the fronthaul link capacity is assumed to be $\bar{T}=400$Mbps. Fig. \ref{fig12} (a) and Fig. \ref{fig12} (b) show the wireless power allocation and the fronthaul rate allocation at each SC, respectively, obtained via Algorithm \ref{table1}. For comparison, in Fig. \ref{fig12} (a) we also provide the power allocation at each SC obtained by solving problem (\ref{eqn:p2}) with equal fronthaul rate allocation, as well as the water-filling based power allocation at each SC (obtained without considering fronthaul link constraint), and in Fig. \ref{fig12} (b) the equal fronthaul rate allocation as well as the fronthaul rate allocation obtained by solving problem (\ref{eqn:p3}) with water-filling based power allocation. It is observed in Fig. \ref{fig12} (a) that Algorithm \ref{table1} results in a more greedy power allocation solution among SCs than the water-filling based method: besides SC $4$, SC $3$ with the second poorest channel condition is also forced to shut down, and the saved power and quantization bits are allocated to SCs $1$ and $2$ with better channel conditions. This is in sharp contrast to the case of equal fronthaul rate allocation for which SC $3$ is allocated the highest transmit power and even SC $4$ with the poorest channel condition is still used for transmission. Moreover, in Fig. \ref{fig12} (b), the fronthaul rate allocations at SCs $1-4$ obtained by Algorithm \ref{table1} are $213.54$Mbps, $186.46$, $0$Mbps, and $0$Mbps, respectively. As a result, different from equal fronthaul rate allocation, Algorithm \ref{table1} tends to allocate more quantization bits to the SCs with strong channel power to explore their good channel conditions, while allocating less (or even no) quantization bits to the SCs with weaker power. A similar fronthaul rate allocation is observed for the water-filling power allocation case.

\subsection{Uniform Scalar Quantization}\label{sec:Uniform Scalar Quantization}

In this subsection, we study problem (P2) in the case of $K=1$ and $M=1$ to evaluate the efficiency of the uniform quantization based scheme. We first solve problem (P2) in this case by extending the results in Section \ref{sec:power control and fronthaul rate allocation for the case of one user and one RRH}. It can be observed that without the last set of constraints involving integer $D_n$'s, problem (P2) is very similar to problem (P1). As a result, in the following we propose a two-stage algorithm to solve problem (P2). First, we ignore the integer constraints in problem (P2), which is denoted by problem (P2-NoInt), and apply an alternating optimization based algorithm similar to Algorithm \ref{table1} to solve it (the details of which are omitted here for brevity). Let $\{\hat{p}_n,\hat{T}_n^{(U)}\}$ denote the converged wireless power and fronthaul rate allocation solution to problem (P2-NoInt). Next, we fix $p_n=\hat{p}_n$'s and find a feasible solution of $T_n^{(U)}$'s based on $\{\hat{p}_n,\hat{T}_n^{(U)}\}$ such that $D_n=NT_n^{(U)}/2B$'s are integers, $\forall n$, in problem (P2). This is achieved by rounding each $N\hat{T}_n^{(U)}/2B$ to its nearby integer as follows:
\begin{align}\label{eqn:feasible}
\frac{NT_n^{(U)}}{2B}=\left\{\begin{array}{ll}\lfloor \frac{N\hat{T}_n^{(U)}}{2B} \rfloor, & {\rm if} ~ \frac{N\hat{T}_n^{(U)}}{2B}-\lfloor \frac{N\hat{T}_n^{(U)}}{2B} \rfloor \leq \alpha, \\ \lceil \frac{N\hat{T}_n^{(U)}}{2B} \rceil, & {\rm otherwise},\end{array}\right. ~~~ n=1,\cdots,N,
\end{align}where $0\leq \alpha \leq 1$, and $\lfloor x \rfloor$ denotes the maximum integer that is no larger than $x$. Note that we can always find a feasible solution of $T_n$'s by simply setting $\alpha=1$ in (\ref{eqn:feasible}) since in this case we have $\sum_{n=1}^N T_n^{(U)}\leq \sum_{n=1}^N \hat{T}_n^{(U)} \leq \bar{T}$. In the following, we show how to find a better feasible solution by optimizing $\alpha$. It can be observed from (\ref{eqn:feasible}) that with decreasing $\alpha$, the values of $T_n^{(U)}$'s will be non-decreasing, $\forall n$. As a result, the objective value of problem (P2) will be non-decreasing, but the fronthaul link constraint in problem (P2) will be more difficult to satisfy. Thereby, we propose to apply a simple bisection method to find the optimal value of $\alpha$, denoted by $\alpha^\ast$, which is summarized in Table \ref{table4}. After $\alpha^\ast$ is obtained, the feasible solution of $T_n^{(U)}$'s can be efficiently obtained by taking $\alpha^\ast$ into (\ref{eqn:feasible}). Notice that by (\ref{eqn:feasible}) the number of quantization bits per SC, $D_n$, is now allowed to be zero, instead of being a strictly positive integer as assumed in Sections \ref{sec:Two Scalar Quantization Models} and \ref{sec:Problem Formulation}.\footnote{In the case of $D_n=0$ and hence $T_n^{(U)}=0$, for any SC $n$, the achievable end-to-end rate for the uniform scalar quantization given in (\ref{eqn:new end to end rate}) no longer holds, which instead should be set to zero intuitively.}

\begin{table}[htp]
\begin{center}
\caption{\textbf{Algorithm \ref{table4}}: Algorithm to Find Feasible Solution of $T_n^{(U)}$'s to problem (P2)} \vspace{0.2cm}
 \hrule
\vspace{0.2cm}

\begin{itemize}
\item[1.] Initialize $\alpha_{{\rm min}}=0$, $\alpha_{{\rm max}}=1$;
\item[2.] Repeat
\begin{itemize}
\item[a.] Set $\alpha=\frac{\alpha_{{\rm min}}+\alpha_{{\rm max}}}{2}$;
\item[b.] Take $\alpha$ into (\ref{eqn:feasible}). If $T_n^{(U)}$'s, $\forall n$, satisfy the fronthaul link capacity constraint in problem (P2), set $\alpha_{{\rm max}}=\alpha$; otherwise, set $\alpha_{{\rm min}}=\alpha$;
\end{itemize}
\item[3.] Until $\alpha_{{\rm max}}-\alpha_{{\rm min}}<\varepsilon$, where $\varepsilon$ is a small value to control the accuracy of the algorithm;
\item[4.] Take $\alpha$ into (\ref{eqn:feasible}) to obtain the feasible solution of $T_n^{(U)}$'s, $\forall n$.
\end{itemize}

\vspace{0.2cm} \hrule \label{table4}
\end{center} \vspace{-15pt}
\end{table}

Next, we evaluate the end-to-end rate performance of the uniform scalar quantization based scheme in the case of $K=1$ and $M=1$. Note that a cut-set based capacity upper bound of our studied C-RAN is \cite{Shamai09}
\begin{align}\label{eqn:capacity upper bound}
C=\min\left(\frac{1}{N}\sum\limits_{n=1}^N\log_2\left(1+\frac{|h_n|^2p_n^{{\rm wf}}}{\sigma_n^2}\right),\frac{\bar{T}}{B}\right) ~ {\rm bps/Hz},
\end{align}where $\{p_n^{{\rm wf}}\}$ is the water-filling based optimal power solution given in (\ref{eqn:water filling}).
\begin{proposition}\label{proposition3}
In the case of $K=1$ and $M=1$, let $\bar{R}_{{\rm sum}}^{(G)}$ denote the optimal value of problem (P1) with an additional set of constraints of
\begin{align}\label{eqn:constraint1}
q_n=\frac{|h_n|^2p_n+\sigma_n^2}{2^{\frac{NT_n^{(G)}}{B}}-1}=\sigma_n^2, ~~~ n=1,\cdots,N.
\end{align}Then we have $\bar{R}_{{\rm sum}}^{(G)}/B\geq C-1$.
\end{proposition}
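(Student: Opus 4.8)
The plan is to observe that, once the extra constraint $q_n=\sigma_n^2$ in (\ref{eqn:constraint1}) is imposed, the fronthaul rate at each SC is no longer a free variable but is pinned to the transmit power: from (\ref{eqn:fronthaul link SC}) with $M=1$ we get $T_n^{(G)}=\frac{B}{N}\log_2\!\big(2+\frac{|h_n|^2 p_n}{\sigma_n^2}\big)$, while from (\ref{eqn:test channel user rate}) the end-to-end rate becomes $R_n^{(G)}=\frac{B}{N}\log_2\!\big(1+\frac{|h_n|^2 p_n}{2\sigma_n^2}\big)$. A one-line computation using $\frac{2+x}{1+x/2}=2$ then gives the \emph{exact} per-SC identity $T_n^{(G)}-R_n^{(G)}=\frac{B}{N}$, valid for every $p_n\ge 0$. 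Summing over the $N$ SCs yields $R_{\rm sum}^{(G)}=T_{\rm sum}^{(G)}-B$, so the constrained version of (P1) is equivalent to maximizing $T_{\rm sum}^{(G)}=\frac{B}{N}\sum_n\log_2(2+|h_n|^2 p_n/\sigma_n^2)$ over $\{p_n\}$ subject to $\sum_n p_n\le\bar P$ and $T_{\rm sum}^{(G)}\le\bar T$, and then subtracting $B$. Hence it suffices to exhibit a single feasible power allocation whose induced $T_{\rm sum}^{(G)}$ is at least $BC$.

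For the feasible allocation I would take the water-filling solution $\{p_n^{\rm wf}\}$ of (\ref{eqn:water filling}), possibly rescaled. Since $\log_2(2+x)\ge\log_2(1+x)$, the power $\{p_n^{\rm wf}\}$ induces $T_{\rm sum}^{(G)}(\{p_n^{\rm wf}\})\ge\frac{B}{N}\sum_n\log_2(1+|h_n|^2 p_n^{\rm wf}/\sigma_n^2)$, i.e. at least $B$ times the wireless term inside the $\min$ of (\ref{eqn:capacity upper bound}), and it meets the power budget with equality. If this $T_{\rm sum}^{(G)}$ does not exceed $\bar T$, then $\{p_n^{\rm wf}\}$ is already feasible for the constrained (P1). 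Otherwise, scale it as $p_n(t)=t\,p_n^{\rm wf}$, $t\in[0,1]$: the power budget still holds, and $T_{\rm sum}^{(G)}(\{p_n(t)\})$ is continuous in $t$, equals $B$ at $t=0$ and exceeds $\bar T$ at $t=1$, so (under the mild requirement $\bar T\ge B$, which is in any case necessary for the constrained problem to be feasible) the intermediate value theorem supplies $t^\ast$ with $T_{\rm sum}^{(G)}(\{p_n(t^\ast)\})=\bar T$. In either case the constructed allocation is feasible and achieves $T_{\rm sum}^{(G)}\ge\min\!\big(\bar T,\ \frac{B}{N}\sum_n\log_2(1+|h_n|^2 p_n^{\rm wf}/\sigma_n^2)\big)=BC$, whence $\bar R_{\rm sum}^{(G)}\ge BC-B$, which is exactly $\bar R_{\rm sum}^{(G)}/B\ge C-1$.

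The only genuinely substantive step is spotting the exact identity $T_n^{(G)}-R_n^{(G)}=B/N$ under $q_n=\sigma_n^2$; it is what makes the constant gap precisely $1$ bit/s/Hz rather than merely bounded, so I would verify it is an equality (not just an inequality) and keep careful track of units so that the ``$-B$'' term matches the ``$-1$'' in the statement. The part that needs a little care is feasibility: the water-filling allocation maximizes the wireless rate but may overrun the fronthaul budget, so the down-scaling plus intermediate-value argument is essential, and I would explicitly flag the degenerate regime $\bar T<B$, where no point satisfying (\ref{eqn:constraint1}) is feasible and the claimed bound is vacuous, as well as the boundary case $\bar T=B$, where $\bar R_{\rm sum}^{(G)}=0$ while $C\le 1$.
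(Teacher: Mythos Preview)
Your proof is correct and follows essentially the same route as the paper: both hinge on the exact per-SC identity $T_n^{(G)}-R_n^{(G)}=B/N$ under $q_n=\sigma_n^2$ (the paper writes it in aggregate as $T^{(G)}/B=R_{\rm sum}^{(G)}/B+1$), reduce the constrained problem to a single power-allocation problem, and then compare to $C$ via the elementary bound $\log_2(2+x)\ge\log_2(1+x)$ (equivalently $\log_2(1+x/2)\ge\log_2(1+x)-1$). The only cosmetic difference is in the feasibility step: the paper first solves the reduced problem exactly using the water-filling allocation $\{\tilde p_n\}$ with doubled noise $2\sigma_n^2$ and then invokes existence of a feasible point meeting the fronthaul with equality, whereas you work directly with $\{p_n^{\rm wf}\}$ and use a continuous down-scaling plus the intermediate value theorem; both arguments are equivalent and yield the same bound.
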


\begin{proof}
Please refer to Appendix \ref{appendix6}.
%Please refer to Appendix C in our online technical report \cite{Liu14}.
\end{proof}

Proposition \ref{proposition3} implies that with the simple solution $\{p_n=\check{p}_n,T_n^{(G)}=(B/N)\log_2(2+|h_n|^2\check{p}_n/\sigma_n^2)\}$ with $\check{p}_n$'s denoting the optimal solution to problem (\ref{eqn:fix quantization in P1}) given in Appendix \ref{appendix6}, the Gaussian test channel based scheme can achieve a capacity to within $1$bps/Hz. Next, for the uniform scalar quantization, by setting the quantization noise level given in (\ref{eqn:quantization noise power}) as $q_n=3\sigma_n^2$, $\forall n$, in problem (P2-NoInt), we have the following proposition.

\begin{proposition}\label{proposition4}
In the case of $K=1$ and $M=1$, $\{p_n=\check{p}_n,T_n^{(G)}=(B/N)\log_2(1+|h_n|^2\check{p}_n/\sigma_n^2)\}$ is a feasible solution to problem (P2-NoInt). Let $\bar{R}_{{\rm sum}}^{(U)}$ denote the objective value of problem (P2-NoInt) achieved by the above solution, we then have $\bar{R}_{{\rm sum}}^{(U)}/B> \bar{R}_{{\rm sum}}^{(G)}/B-1$.
\end{proposition}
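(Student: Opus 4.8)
\emph{Proof proposal.} The plan is to make the proposed allocation explicit, verify that it is feasible for (P2-NoInt), and then bound the rate gap by a one-line per-subcarrier estimate. Recall that in the uniform-quantization rate (\ref{eqn:new end to end rate}) the effective noise-plus-distortion term at SC $n$ is $\sigma_n^2+3(|h_n|^2p_n+\sigma_n^2)2^{-NT_n^{(U)}/B}$, i.e., the quantization noise is $q_n=3(|h_n|^2p_n+\sigma_n^2)2^{-NT_n^{(U)}/B}$. Plugging in $p_n=\check{p}_n$ and the stated $T_n^{(U)}=(B/N)\log_2(1+|h_n|^2\check{p}_n/\sigma_n^2)$ yields $2^{-NT_n^{(U)}/B}=\sigma_n^2/(|h_n|^2\check{p}_n+\sigma_n^2)$, so $q_n=3\sigma_n^2$ exactly and the denominator in (\ref{eqn:new end to end rate}) collapses to $4\sigma_n^2$. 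Hence this point achieves the objective value
\begin{align}
\bar{R}_{\rm sum}^{(U)}=\frac{B}{N}\sum_{n=1}^N\log_2\left(1+\frac{|h_n|^2\check{p}_n}{4\sigma_n^2}\right),\nonumber
\end{align}
whereas, by the very definition of $\bar{R}_{\rm sum}^{(G)}$ and $\check{p}_n$ in Proposition \ref{proposition3}, the added constraint $q_n=\sigma_n^2$ there forces the denominator $2\sigma_n^2$, so $\bar{R}_{\rm sum}^{(G)}=(B/N)\sum_n\log_2(1+|h_n|^2\check{p}_n/(2\sigma_n^2))$.

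Next I would check feasibility for (P2-NoInt). The transmit-power constraint $\sum_n\check{p}_n\le\bar{P}$ carries over verbatim, since the reduced problem of Proposition \ref{proposition3} has the same power budget. For the fronthaul constraint, feasibility of $\check{p}_n$ in that reduced problem (where $q_n=\sigma_n^2$ pins $T_n^{(G)}=(B/N)\log_2(2+|h_n|^2\check{p}_n/\sigma_n^2)$) gives $\sum_n(B/N)\log_2(2+|h_n|^2\check{p}_n/\sigma_n^2)\le\bar{T}$. Since $\log_2(1+x)<\log_2(2+x)$ for every $x\ge0$, the proposed rates satisfy $\sum_n T_n^{(U)}=(B/N)\sum_n\log_2(1+|h_n|^2\check{p}_n/\sigma_n^2)<\bar{T}$, so the allocation is strictly feasible for (P2-NoInt). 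The only caveat, that (\ref{eqn:new end to end rate}) is stated for $T_n^{(U)}\ge(2B)/N$, is harmless: on any SC with $\check{p}_n=0$ we have $T_n^{(U)}=0$ and both the formula and its conventional zero-rate replacement give $R_n^{(U)}=0$.

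Finally I would establish the $1$\,bps/Hz gap directly. Set $x_n:=|h_n|^2\check{p}_n/\sigma_n^2\ge0$. Using the two displayed expressions above, the claim $\bar{R}_{\rm sum}^{(U)}/B>\bar{R}_{\rm sum}^{(G)}/B-1$ is equivalent to
\begin{align}
\frac{1}{N}\sum_{n=1}^N\left(\log_2\left(1+\frac{x_n}{2}\right)-\log_2\left(1+\frac{x_n}{4}\right)\right)<1.\nonumber
\end{align}
Each summand equals $\log_2\frac{4+2x_n}{4+x_n}$, and $4+2x_n<8+2x_n$ forces $\frac{4+2x_n}{4+x_n}<2$, so every summand lies strictly below $\log_2 2=1$ and hence so does the average. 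I do not anticipate a genuine obstacle; the work is purely bookkeeping — keeping straight the two fixed distortion levels ($q_n=\sigma_n^2$ for the Gaussian benchmark versus $q_n=3\sigma_n^2$ here), which turn the denominators into $2\sigma_n^2$ and $4\sigma_n^2$, and transferring the feasibility of $\check{p}_n$ from the reduced problem of Proposition \ref{proposition3} to (P2-NoInt). The factor $3$ in the uniform-quantization distortion only shifts the relevant per-SC ratio from $(1+x_n)/(1+x_n/2)$, as in Proposition \ref{proposition3}, to $(1+x_n/2)/(1+x_n/4)$ — both bounded by $2$ — so the same $\log_2 2=1$\,bps/Hz bound survives.
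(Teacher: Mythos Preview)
Your proposal is correct and follows essentially the same route as the paper's proof: the same feasibility check via $\log_2(1+x)<\log_2(2+x)$ and the feasibility of $\check{p}_n$ in the reduced problem, the same reduction of the objective to $(B/N)\sum_n\log_2(1+x_n/4)$ with $x_n=|h_n|^2\check{p}_n/\sigma_n^2$, and the same per-subcarrier bound $\log_2\frac{1+x_n/2}{1+x_n/4}<1$ to obtain the $1$\,bps/Hz gap. Your write-up is in fact more explicit than the paper's on the per-SC inequality and on the $\check{p}_n=0$ edge case, but the underlying argument is identical.
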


\begin{proof}
Please refer to Appendix \ref{appendix7}.
%Please refer to Appendix D in our online technical report \cite{Liu14}.
\end{proof}

It can be inferred from Propositions \ref{proposition3} and \ref{proposition4} that $\bar{R}_{{\rm sum}}^{(U)}/B> \bar{R}_{{\rm sum}}^{(G)}/B-1\geq C-2$. As a result, we have the following corollary.

\begin{corollary}\label{corollary1}
Without the constraints that the number of quantization bits per SC is an integer, with the simple solution $\{p_n=\check{p}_n,T_n^{(G)}=(B/N)\log_2(1+|h_n|^2\check{p}_n/\sigma_n^2)\}$, the uniform scalar quantization based scheme at least achieves a capacity to within $2$bps/Hz in the case of $K=1$ and $M=1$.
\end{corollary}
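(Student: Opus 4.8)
The plan is to obtain the result as an immediate consequence of chaining Propositions \ref{proposition3} and \ref{proposition4}, together with the fact that $C$ in (\ref{eqn:capacity upper bound}) is a valid cut-set upper bound on the end-to-end capacity of the single-user single-RRH C-RAN.

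First I would invoke Proposition \ref{proposition4}: the candidate solution $\{p_n=\check{p}_n,\ T_n^{(U)}=(B/N)\log_2(1+|h_n|^2\check{p}_n/\sigma_n^2)\}$, with the induced quantization noise levels $q_n=3\sigma_n^2$, is feasible for problem (P2-NoInt) (it satisfies both the per-RRH fronthaul capacity constraint and the user transmit power constraint), and the objective value it attains, $\bar{R}_{{\rm sum}}^{(U)}$, obeys $\bar{R}_{{\rm sum}}^{(U)}/B>\bar{R}_{{\rm sum}}^{(G)}/B-1$. Next I would invoke Proposition \ref{proposition3}, which gives $\bar{R}_{{\rm sum}}^{(G)}/B\geq C-1$. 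Substituting the second inequality into the first yields $\bar{R}_{{\rm sum}}^{(U)}/B>(C-1)-1=C-2$, i.e. $C-\bar{R}_{{\rm sum}}^{(U)}/B<2$.

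Finally I would note that $\bar{R}_{{\rm sum}}^{(U)}$ is, by construction, a rate achievable by the uniform scalar quantization scheme under a specific (non-integer-constrained) power and fronthaul rate allocation, while $C$ upper-bounds the true capacity of the system; hence the true capacity lies between $\bar{R}_{{\rm sum}}^{(U)}/B$ and $C$, and the displayed inequality shows this window has width strictly less than $2$ bps/Hz. This establishes the claim: using the explicit closed-form solution stated in the corollary, the uniform scalar quantization scheme comes within $2$ bps/Hz of capacity in the case $K=1$ and $M=1$.

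I do not anticipate any substantive obstacle, since all the analytic content was already discharged in Propositions \ref{proposition3} and \ref{proposition4}. The only points requiring a little care are bookkeeping: confirming that the choice $q_n=3\sigma_n^2$ used in Proposition \ref{proposition4} is exactly the quantization noise produced by the uniform quantizer with $D_n=(N/2B)T_n^{(U)}$ bits (so that $\bar{R}_{{\rm sum}}^{(U)}$ is genuinely achievable, via (\ref{eqn:new end to end rate}), and not merely an estimate), and verifying that dropping the integer constraint on $D_n$ is precisely the relaxation (P2-NoInt) referenced in the corollary's hypothesis, so that the feasibility assertion of Proposition \ref{proposition4} applies verbatim.
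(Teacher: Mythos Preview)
Your proposal is correct and follows exactly the paper's own argument: the corollary is obtained by chaining the inequalities of Propositions \ref{proposition3} and \ref{proposition4} to get $\bar{R}_{{\rm sum}}^{(U)}/B>\bar{R}_{{\rm sum}}^{(G)}/B-1\geq C-2$. The additional bookkeeping you flag (that $q_n=3\sigma_n^2$ is precisely the uniform-quantizer noise at the chosen $T_n^{(U)}$, and that (P2-NoInt) is the relevant relaxation) is accurate and matches the setup in Sections \ref{sec:Uniform Scalar Quantization} and \ref{sec:Uniform Quantization Model}.
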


Corollary \ref{corollary1} gives a worst-case performance gap of the proposed uniform quantization based scheme to the cut-set upper bound $C$ in (\ref{eqn:capacity upper bound}) if we ignore the constraints that each quantization level is represented by an integer number of bits. However, it is difficult to analyze the performance loss due to these integer constraints. In the following subsection, we will provide a numerical example to show the impact of the integer constraints on the end-to-end rate performance.

\subsection{Numerical Example}\label{sec:Numerical Example}

%\begin{table}
%\caption{Setup of Numerical Results} \label{table2}
%\begin{center}
%\begin{tabular}{c|c}
%\hline Channel Bandwidth & $100$ MHz \\
%\hline
%Number of SCs & $32$ \\
%\hline
%User Transmit Power & $23$dBm \\
%\hline
%AWGN & -$169$dBm/Hz \\
%\hline
%Noise Figure & $7$dB \\
%\hline
%User Distance ($d$) & 50m \\
%\hline
%Pass Loss Model & $30.6+36.7\log_{10}(d)$ dB \\
%\hline
%%Bit Error Rate (BER) & $10^{-3}$ \\
%%\hline
%%SNR Gap & $-\ln(5{\rm BER})/1.5$ \\
%%\hline
%\end{tabular}
%\end{center}
%\end{table}

In this subsection, we provide a numerical example to verify our results for the case of $K=1$ and $M=1$. The setup of this example is summarized as follows. The channel bandwidth is assumed to be $B=100$MHz, which is equally divided into $N=32$ SCs. The user's transmit power is $23$dBm. It is assumed that the distance between the user and the RRH is $d=50$m. The pass loss model is $L=30.6+36.7\log_{10}(d)$ dB. Moreover, it is assumed that the power spectral density of the AWGN at the RRH is $-169$dBm/Hz, and the noise figure is $7$dB. First, we evaluate the performance of the proposed uniform scalar quantization based scheme against that of the Gaussian test channel based scheme as well as the capacity upper bound given in (\ref{eqn:capacity upper bound}). Fig. \ref{fig9} shows the end-to-end rate achieved by various schemes versus the fronthaul link capacity. Note that with the algorithm proposed for problem (P2-NoInt) in Section \ref{sec:Uniform Scalar Quantization}, we use $\{p_n=\check{p}_n,T_n^{(G)}=(B/N)\log_2(1+|h_n|^2\check{p}_n/\sigma_n^2)\}$ as the initial point such that the worst-case performance gap shown in Corollary \ref{corollary1} can be guaranteed. It is observed from Fig. \ref{fig9} that for various values of $\bar{T}$, uniform scalar quantization based scheme without the integer constraints in problem (P2) does achieve a capacity within $2$bps/Hz to $C$. Moreover, it is observed that with Algorithm \ref{table4}, the performance loss due to the integer constraints is negligible. However, if we simply set $\alpha=1$ in (\ref{eqn:feasible}) to find feasible $T_n^{(U)}$'s, there will be a considerable rate loss. As a result, our proposed Algorithm \ref{table4} is practically useful for setting $\alpha$ such that uniform scalar quantization based scheme can perform very close to the capacity upper bound. Last, it is observed that the performance gap of all the schemes to the upper bound $C$ vanishes as the fronthaul link capacity increases. This is because if $\bar{T}$ is sufficiently large at the RRH, each symbol can be quantized by a large number of bits such that the specific quantization method does not affect the quantization noise significantly.

\begin{figure}
\begin{center}
\scalebox{0.5}{\includegraphics*{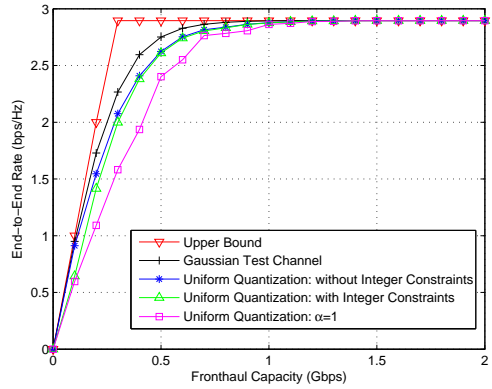}}
\end{center}\vspace{-15pt}
\caption{Performance of uniform scalar quantization.}\label{fig9} \vspace{-20pt}
\end{figure}

To further illustrate the gain from joint optimization of wireless power and fronthaul rate allocation, in the following we introduce some benchmark schemes where either wireless power or fronthaul rate allocation is optimized, but not both.

\begin{itemize}
\item{\bf Benchmark Scheme 1: Equal Power Allocation.} In this scheme, the user allocates its transmit power equally to each SC, i.e., $p_n=\bar{P}/N$, $\forall n$. Then, with the given equal power allocation, we optimize the fronthaul rate allocation at the RRH to maximize the end-to-end rate.
\item{\bf Benchmark Scheme 2: Water-Filling Power Allocation.} In this scheme, the user ignores the fronthaul link constraints and allocates its transmit power based on water-filling solution as shown in (\ref{eqn:water filling}). Then, with the given water-filling based power allocation, we optimize the fronthaul rate allocation at the RRH to maximize the end-to-end rate.
\item{\bf Benchmark Scheme 3: Equal Fronthaul Rate Allocation.} In this scheme, the RRH equally allocates its fronthaul link capacity among SCs, $T_n^{(U)}=\bar{T}/N$. Then, with the given equal fronthaul rate allocation, we optimize the transmit power of the user to maximize the end-to-end rate.
\item{\bf Benchmark Scheme 4: Equal Power and Fronthaul Rate Allocation.} In this scheme, the user allocates its transmit power equally to each SC, and the RRH equally allocates its fronthaul link bandwidth among SCs.
\end{itemize}

\begin{figure}
\begin{center}
\scalebox{0.5}{\includegraphics*{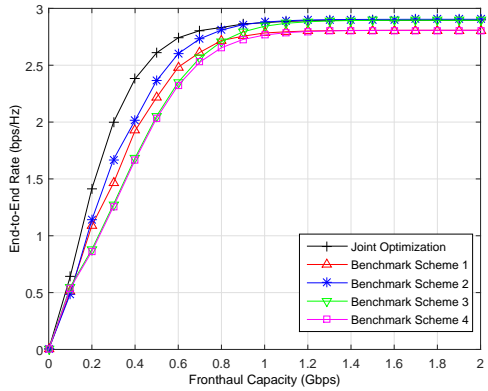}}
\end{center}\vspace{-15pt}
\caption{Performance gain due to joint optimization of wireless power and fronthaul rate allocation.}\label{fig4} \vspace{-20pt}
\end{figure}

Fig. \ref{fig4} shows the performance comparison among various proposed solutions for the uniform scalar quantization based scheme. It is observed that compared with Benchmark Schemes 1-4 where only either wireless power or fronthaul rate allocation is optimized, our joint optimization solution proposed in Section \ref{sec:Uniform Scalar Quantization} achieves a much higher end-to-end rate, especially when the fronthaul link capacity is small, e.g., $\bar{T}\leq 0.5$ Gbps. Furthermore, it is observed from Benchmark Schemes 1 and 3 that when $\bar{T}$ is small, fronthaul rate optimization plays the dominant role in improving the end-to-end rate performance, while when $\bar{T}$ is large, most of the optimization gain comes from the wireless power allocation. Furthermore, when $\bar{T}$ is sufficiently large, the performance of Benchmark Schemes 2 and 3, for which wireless power allocation is optimized, even converges to the joint optimization solution proposed in Section \ref{sec:Uniform Scalar Quantization}.

\section{General Case: Multiple Users and Multiple RRHs}\label{sec:General Case: Multiple Users and Multiple RRHs}

In this section, we consider the joint wireless power allocation and fronthaul rate allocation in the general C-RAN with multiple users and multiple RRHs, i.e., $K\geq 1$ and $M\geq 1$.

\subsection{Gaussian Test Channel}\label{sec:power control and fronthaul rate allocation for the case of multiple users and multiple RRHs}

In this subsection, we solve problem (P1). It is worth noting that different from Section \ref{sec:power control and fronthaul rate allocation for the case of one user and one RRH}, in the case of multiple RRHs, the throughput $R_{{\rm sum}}^{(G)}$ given in (\ref{eqn:test channel rate}) is not concave over $T_{m,n}^{(G)}$'s with given $p_{k,n}$'s due to the summation over $m$ in (\ref{eqn:optimal SINR}). As a result, the alternating optimization based solution proposed in Section \ref{sec:power control and fronthaul rate allocation for the case of one user and one RRH} cannot be directly extended to the general case of $K\geq 1$ and $M\geq 1$.

To deal with the above difficulty, we change the design variables in problem (P1). Define
\begin{align}\label{eqn:new fronthaul rate SC}
\psi_{m,n}=2^{\frac{NT_{m,n}^{(G)}}{B}}-1, ~~~ \forall m,n.
\end{align}Then, by changing the design variables of problem (P1) from $\{p_{k,n},T_{m,n}^{(G)}\}$ to $\{p_{k,n},\psi_{m,n}\}$, problem (P1) is transformed into the following problem.
\begin{align}\begin{small}\mathop{\mathtt{Maximize}}_{\{p_{k,n},\psi_{m,n}\}} \end{small} & ~~~ \begin{small} \frac{B}{N}\sum\limits_{k=1}^K\sum\limits_{n\in \Omega_k}\log_2\left(1+\sum\limits_{m=1}^M\frac{|h_{m,k,n}|^2p_{k,n}\psi_{m,n}}{\sigma_{m,n}^2\psi_{m,n}+|h_{m,k,n}|^2p_{k,n}+\sigma_{m,n}^2}\right) \end{small} \nonumber \\
\begin{small} \mathtt {Subject \ to} \end{small} & ~~~ \begin{small} \frac{B}{N}\sum\limits_{n=1}^N\log_2(1+\psi_{m,n})\leq \bar{T}_m, ~~~ \forall m, \end{small} \nonumber \\ & ~~~ \begin{small} \sum\limits_{n=1}^Np_{k,n}\leq \bar{P}_k, ~~~ \forall k. \end{small} \label{eqn:p4}
\end{align}Problem (\ref{eqn:p4}) is still a non-convex problem. In the following, we propose to apply the techniques of alternating optimization as well as convex approximation to solve it.

First, by fixing $\psi_{m,n}=\hat{\psi}_{m,n}$'s, we optimize the transmit power allocation $p_{k,n}$'s by solving the following problem.
\begin{align}\begin{small}\mathop{\mathtt{Maximize}}_{\{p_{k,n}\}}\end{small} & ~~~ \begin{small}\frac{B}{N}\sum\limits_{k=1}^K\sum\limits_{n\in \Omega_k}\log_2\left(1+\sum\limits_{m=1}^M\frac{|h_{m,k,n}|^2p_{k,n}\hat{\psi}_{m,n}}{\sigma_{m,n}^2\hat{\psi}_{m,n}+|h_{m,k,n}|^2p_{k,n}+\sigma_{m,n}^2}\right)\end{small} \nonumber \\
\begin{small}\mathtt {Subject \ to}\end{small} & ~~~ \begin{small}\sum\limits_{n=1}^Np_{k,n}\leq \bar{P}_k, ~~~ \forall k.\end{small} \label{eqn:p5}
\end{align}Let $\hat{p}_{k,n}$'s denote the optimal solution to problem (\ref{eqn:p5}). Then, by fixing $p_{k,n}=\hat{p}_{k,n}$'s, we optimize the fronthaul rate allocation by solving the following problem.
\begin{align}\mathop{\mathtt{Maximize}}_{\{\psi_{m,n}\}} & ~~~ \frac{B}{N}\sum\limits_{k=1}^K\sum\limits_{n\in \Omega_k}\log_2\left(1+\sum\limits_{m=1}^M\frac{|h_{m,k,n}|^2\hat{p}_{k,n}\psi_{m,n}}{\sigma_{m,n}^2\psi_{m,n}+|h_{m,k,n}|^2\hat{p}_{k,n}+\sigma_{m,n}^2}\right) \nonumber \\
\mathtt {Subject \ to} & ~~~ \frac{B}{N}\sum\limits_{n=1}^N\log_2(1+\psi_{m,n})\leq \bar{T}_m, ~~~ \forall m. \label{eqn:p6}
\end{align}Let $\hat{\psi}_{m,n}$'s denote the optimal solution to problem (\ref{eqn:p6}). Then, the above update of $p_{k,n}$'s and $\psi_{m,n}$'s is iterated until convergence. In the following, we provide how to solve problems (\ref{eqn:p5}) and (\ref{eqn:p6}), respectively.

First, we consider problem (\ref{eqn:p5}). We have the following lemma.
\begin{lemma}\label{lemma1}
The objective function of problem (\ref{eqn:p5}) is a concave function over $\{p_{k,n}\}$.
\end{lemma}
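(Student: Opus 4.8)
The plan is to show concavity of the objective of problem (\ref{eqn:p5}) by establishing concavity of the inner SNR-like term in each summand, and then invoking the standard composition rules. The objective is $\frac{B}{N}\sum_{k}\sum_{n\in\Omega_k}\log_2\bigl(1+\sum_{m=1}^M g_{m,n}(p_{k,n})\bigr)$, where
\begin{align*}
g_{m,n}(p) = \frac{|h_{m,k,n}|^2 p\,\hat\psi_{m,n}}{\sigma_{m,n}^2\hat\psi_{m,n} + |h_{m,k,n}|^2 p + \sigma_{m,n}^2}.
\end{align*}
First I would note that since the SC sets $\Omega_k$ are disjoint and each summand $\log_2(1+\sum_m g_{m,n}(p_{k,n}))$ depends on a single scalar variable $p_{k,n}$, it suffices to prove that the single-variable function $p\mapsto\log_2(1+\sum_m g_{m,n}(p))$ is concave and nondecreasing on $p\ge 0$; the full objective is then a nonnegative sum of such functions, hence concave.

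The key step is to analyze $g_{m,n}(p)$. Writing $a=|h_{m,k,n}|^2\hat\psi_{m,n}>0$, $b=|h_{m,k,n}|^2>0$, $c=\sigma_{m,n}^2(\hat\psi_{m,n}+1)>0$, we have $g_{m,n}(p)=\frac{ap}{bp+c}=\frac{a}{b}\bigl(1-\frac{c}{bp+c}\bigr)$, which is manifestly increasing and concave in $p$ for $p\ge 0$ (its second derivative is $-2ab c/(bp+c)^3<0$). Hence $\sum_m g_{m,n}(p)$ is increasing and concave, so $1+\sum_m g_{m,n}(p)$ is a positive, increasing, concave function. Since $\log_2(\cdot)$ is increasing and concave, the composition $\log_2(1+\sum_m g_{m,n}(p))$ is concave (composition of a concave nondecreasing function with a concave function) and nondecreasing. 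Multiplying by $B/N>0$ and summing over all $(k,n)$ preserves concavity, which completes the argument.

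The main obstacle — really the only nontrivial point — is verifying that $g_{m,n}$ is concave; everything else is the routine composition calculus. I would handle this cleanly via the $\frac{ap}{bp+c}$ rewriting rather than differentiating the original quotient directly, since the partial-fraction form $\frac{a}{b}\bigl(1-\frac{c}{bp+c}\bigr)$ makes both monotonicity and concavity immediate and avoids messy quotient-rule algebra. One should also remark that the restriction to a single variable per term is what makes this work without needing a joint Hessian computation: in contrast to the fronthaul-rate subproblem (\ref{eqn:p6}), where the summation over $m$ couples variables inside the log, here for fixed $\hat\psi_{m,n}$ the power variables decouple across terms, so concavity reduces to the scalar statement above.
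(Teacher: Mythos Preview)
Your proof is correct and follows essentially the same route as the paper's: show each inner term $g_{m,n}(p)$ is concave in the scalar $p_{k,n}$, sum over $m$, then apply the composition rule with the increasing concave $\log_2(1+\cdot)$ and sum over $(k,n)$. Your argument is in fact more explicit than the paper's, which simply asserts the concavity of $\varphi_{m,k,n}$ and cites the composition rule, whereas you verify it via the $\tfrac{ap}{bp+c}$ rewriting and spell out the monotonicity needed for the composition.
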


\begin{proof}
Please refer to Appendix \ref{appendix8}.
%Please refer to Appendix E in our online technical report \cite{Liu14}.
\end{proof}

According to Lemma \ref{lemma1}, problem (\ref{eqn:p5}) is a convex optimization problem. As a result, its optimal solution can be efficiently obtained via the interior-point method \cite{Boyd04}.

Next, we consider problem (\ref{eqn:p6}). Similar to Lemma \ref{lemma1}, it can be shown that the objective function of problem (\ref{eqn:p6}) is a concave function over $\psi_{m,n}$'s. However, the fronthaul link capacity constraints in problem (\ref{eqn:p6}) are not convex. In the following, we apply the convex approximation technique to convexify the fronthaul link capacity constraints. Specifically, since according to (\ref{eqn:new fronthaul rate SC}) $T_m^{(G)}$ is concave over $\psi_{m,n}$'s, its first-order approximation serves as an upper bound to it, i.e.,
\begin{align}\label{eqn:first order approximation}
T_m^{(G)}=\frac{B}{N}\sum\limits_{n=1}^N\log_2(1+\psi_{m,n})\leq \frac{B}{N}\sum\limits_{n=1}^N\left(\log_2(1+\tilde{\psi}_{m,n})+\frac{\psi_{m,n}-\tilde{\psi}_{m,n}}{(1+\tilde{\psi}_{m,n})\ln2}\right), ~~~ m=1,\cdots,M.
\end{align}Note that the above inequality holds given any $\tilde{\psi}_{k,n}$'s. As a result, we solve the following problem via a relaxation of problem (\ref{eqn:p6}).
\begin{align}\mathop{\mathtt{Maximize}}_{\{\psi_{m,n}\}} & ~~~ \frac{B}{N}\sum\limits_{k=1}^K\sum\limits_{n\in \Omega_k}\log_2\left(1+\sum\limits_{m=1}^M\frac{|h_{m,k,n}|^2\hat{p}_{k,n}\psi_{m,n}}{\sigma_{m,n}^2\psi_{m,n}+|h_{m,k,n}|^2\hat{p}_{k,n}+\sigma_{m,n}^2}\right) \nonumber \\
\mathtt {Subject \ to} & ~~~ \frac{B}{N}\sum\limits_{n=1}^N\left(\log_2(1+\tilde{\psi}_{m,n})+\frac{\psi_{m,n}-\tilde{\psi}_{m,n}}{(1+\tilde{\psi}_{m,n})\ln2}\right)\leq \bar{T}_m, ~~~ \forall m. \label{eqn:p7}
\end{align}Problem (\ref{eqn:p7}) is a convex problem, and thus its optimal solution, denoted by $\check{\psi}_{m,n}$'s, can be efficiently obtained via the interior-point method \cite{Boyd04}. Then we have the following lemma.

\begin{lemma}\label{lemma2}
Suppose that $\tilde{\psi}_{m,n}$'s is a feasible solution to problem (\ref{eqn:p6}), i.e., $\frac{B}{N}\sum_{n=1}^N\log_2(1+\tilde{\psi}_{m,n})\leq \bar{T}_m$, $\forall m$. Then, $\check{\psi}_{m,n}$'s is a feasible solution to problem (\ref{eqn:p6}) and achieves an objective value no smaller than that achieved by the solution $\tilde{\psi}_{m,n}$'s.
\end{lemma}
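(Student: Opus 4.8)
The plan is to establish the two claims of Lemma~\ref{lemma2} in turn: feasibility of $\check{\psi}_{m,n}$'s for problem~(\ref{eqn:p6}), and the improvement in objective value. Both follow from the structure of the first-order approximation in~(\ref{eqn:first order approximation}). The key observation is that $\log_2(1+\psi)$ is concave in $\psi$, so for each $m$ and $n$ its first-order Taylor expansion around the point $\tilde{\psi}_{m,n}$ lies entirely above the function itself, i.e., $\log_2(1+\psi_{m,n}) \leq \log_2(1+\tilde{\psi}_{m,n}) + \frac{\psi_{m,n}-\tilde{\psi}_{m,n}}{(1+\tilde{\psi}_{m,n})\ln 2}$ for every $\psi_{m,n}\geq 0$. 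Summing over $n$ and scaling by $B/N$ gives that the right-hand side of the constraint in problem~(\ref{eqn:p7}) is a global overestimator of $T_m^{(G)}=\frac{B}{N}\sum_{n=1}^N\log_2(1+\psi_{m,n})$.

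For feasibility, I would argue as follows. Since $\check{\psi}_{m,n}$'s solve problem~(\ref{eqn:p7}), they satisfy its constraints: $\frac{B}{N}\sum_{n=1}^N\bigl(\log_2(1+\tilde{\psi}_{m,n})+\frac{\check{\psi}_{m,n}-\tilde{\psi}_{m,n}}{(1+\tilde{\psi}_{m,n})\ln2}\bigr)\leq \bar{T}_m$ for all $m$. Applying the global-overestimator inequality above with $\psi_{m,n}=\check{\psi}_{m,n}$ yields $\frac{B}{N}\sum_{n=1}^N\log_2(1+\check{\psi}_{m,n}) \leq \frac{B}{N}\sum_{n=1}^N\bigl(\log_2(1+\tilde{\psi}_{m,n})+\frac{\check{\psi}_{m,n}-\tilde{\psi}_{m,n}}{(1+\tilde{\psi}_{m,n})\ln2}\bigr) \leq \bar{T}_m$, which is exactly the fronthaul constraint of problem~(\ref{eqn:p6}). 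Hence $\check{\psi}_{m,n}$'s is feasible for~(\ref{eqn:p6}).

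For the objective improvement, I would use that $\tilde{\psi}_{m,n}$'s is itself a feasible point of problem~(\ref{eqn:p7}): evaluating the approximated constraint at $\psi_{m,n}=\tilde{\psi}_{m,n}$ collapses the correction term to zero, leaving $\frac{B}{N}\sum_{n=1}^N\log_2(1+\tilde{\psi}_{m,n})\leq\bar{T}_m$, which holds by the hypothesis that $\tilde{\psi}_{m,n}$'s is feasible for~(\ref{eqn:p6}). Since problems~(\ref{eqn:p6}) and~(\ref{eqn:p7}) share the identical objective function, and $\check{\psi}_{m,n}$'s maximizes that objective over the feasible set of~(\ref{eqn:p7}) which contains $\tilde{\psi}_{m,n}$'s, the objective value at $\check{\psi}_{m,n}$'s is at least that at $\tilde{\psi}_{m,n}$'s. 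Combining with the feasibility just shown completes the proof.

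The routine calculations here are genuinely routine — the concavity of $\log_2(1+\cdot)$ and the vanishing of the first-order correction at the expansion point — so there is no serious obstacle. The only subtlety worth stating explicitly is the sign/domain check that the overestimator inequality holds for all nonnegative $\psi_{m,n}$ (not just near $\tilde{\psi}_{m,n}$), which is immediate from global concavity but should be flagged so the feasibility chain of inequalities is airtight; I would also note in passing that this is the standard inner-approximation (successive convex approximation) argument, which additionally guarantees that iterating the update in problem~(\ref{eqn:p7}) produces a monotonically nondecreasing, and hence convergent, sequence of objective values for problem~(\ref{eqn:p6}).
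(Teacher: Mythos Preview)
Your proposal is correct and follows essentially the same approach as the paper's proof: feasibility of $\check{\psi}_{m,n}$'s in (\ref{eqn:p6}) via the concave-overestimator inequality (\ref{eqn:first order approximation}), and objective improvement via feasibility of $\tilde{\psi}_{m,n}$'s in (\ref{eqn:p7}) together with optimality of $\check{\psi}_{m,n}$'s there. Your write-up is simply more explicit about the concavity justification and the vanishing correction term at the expansion point, but the argument is the same.
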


\begin{proof}
Please refer to Appendix \ref{appendix9}.
%Please refer to Appendix F in our online technical report \cite{Liu14}.
\end{proof}

Since the optimal solution to problem (\ref{eqn:p6}), i.e., $\hat{\psi}_{m,n}$'s, is difficult to obtain, in the following we use $\check{\psi}_{m,n}$ as the solution to (\ref{eqn:p6}) according to Lemma \ref{lemma2}, i.e., $\hat{\psi}_{m,n}=\check{\psi}_{m,n}$, $\forall m,n$.

After problems (\ref{eqn:p5}) and (\ref{eqn:p6}) are solved, we are ready to propose the overall iterative algorithm to solve problem (\ref{eqn:p4}), which is summarized in Table \ref{table3}. Note that in Step 2.c., we set $\tilde{\psi}_{m,n}=\psi_{m,n}^{(i-1)}$'s in problem (\ref{eqn:p7}). According to Lemma \ref{lemma2}, $\psi_{m,n}^{(i)}$'s will achieve a sum-rate that is no smaller than that achieved by $\psi_{m,n}^{(i-1)}$'s. To summarize, a monotonic convergence can be guaranteed for Algorithm III since the objective value of problem (\ref{eqn:p4}) is increased after each iteration and it is upper-bounded by a finite value.

\begin{table}[htp]
\begin{center}
\caption{\textbf{Algorithm III}: Algorithm for Problem (\ref{eqn:p4})} \vspace{0.2cm}
 \hrule
\vspace{0.2cm}
\begin{itemize}
\item[1.] Initialize: Set $\psi_{m,n}^{(0)}=2^{\frac{\bar{T}_m}{B}}-1$, $\forall m,n$, $R^{(0)}=0$, and $i=0$;
\item[2.] Repeat
\begin{itemize}
\item[a.] $i=i+1$;
\item[b.] Update $\{p_{k,n}^{(i)}\}$ by solving problem (\ref{eqn:p5}) with $\hat{\psi}_{m,n}=\psi_{m,n}^{(i-1)}$, $\forall m,n$, via interior-point method;
\item[c.] Update $\{\psi_{m,n}^{(i)}\}$ by solving problem (\ref{eqn:p7}) with $\hat{p}_{k,n}=p_{k,n}^{(i)}$ and $\tilde{\psi}_{m,n}=\psi_{m,n}^{(i-1)}$, $\forall m,n$, via interior-point method;
\end{itemize}
\item[3.] Until $R^{(i)}-R^{(i-1)}\leq \varepsilon$, where $R^{(i)}$ denotes the objective value of problem (\ref{eqn:p4}) achieved by the solution $\{p_{k,n}^{(i)},\psi_{m,n}^{(i)}\}$, and $\varepsilon$ is a small value to control the accuracy of the algorithm.
\end{itemize}
\vspace{0.2cm} \hrule \label{table3}
\end{center}\vspace{-20pt}
\end{table}

\subsection{Uniform Scalar Quantization}\label{sec:Joint Optimization of Wireless Power Allocation and Fronthaul Rate Allocation}

In this subsection, we propose an efficient algorithm to solve problem (P2) by jointly optimizing the wireless power allocation as well as the fronthaul rate allocation. To be consistent with the solution to problem (P1) proposed in Section \ref{sec:power control and fronthaul rate allocation for the case of multiple users and multiple RRHs}, we define
\begin{align}
\psi_{m,n}=2^{\frac{NT_{m,n}^{(U)}}{B}}=2^{2D_{m,n}}, ~~~ \forall m,n.
\end{align}Then, by changing the design variables from $\{p_{k,n},T_{m,n}^{(U)}\}$ into $\{p_{k,n},\psi_{m,n}\}$, problem (P2) is transformed into the following problem.
\begin{align}\mathop{\mathtt{Maximize}}_{\{p_{k,n},\psi_{m,n}\}} & ~~~ \frac{B}{N}\sum\limits_{k=1}^K\sum\limits_{n\in \Omega_k}\log_2\left(1+\sum\limits_{m=1}^M\frac{|h_{m,k,n}|^2p_{k,n}\psi_{m,n}}{\sigma_{m,n}^2\psi_{m,n}+3(|h_{m,k,n}|^2p_{k,n}+\sigma_{m,n}^2)}\right) \nonumber \\
\mathtt {Subject \ to} & ~~~ \frac{B}{N}\sum\limits_{n=1}^N \log_2\psi_{m,n}\leq \bar{T}_m, ~~~ \forall m, \nonumber \\ & ~~~ \sum\limits_{n\in \Omega_k} p_{k,n}\leq \bar{P}_k, ~~~ \forall k, \nonumber \\ & ~~~ \psi_{m,n}=2^{2D_{m,n}}, ~ D_{m,n}\in \{1,2,\cdots\} ~ {\rm is \ an \ integer}, ~ \forall m,n. \label{eqn:p8}
\end{align}

It can be observed that if we ignore the last set of constraints involving integers $D_{m,n}$'s, then problem (\ref{eqn:p8}) is very similar to problem (\ref{eqn:p4}). As a result, we propose a two-stage algorithm to solve problem (\ref{eqn:p8}). First, we ignore the last constraints in problem (\ref{eqn:p8}) and apply an alternating optimization based algorithm similar to Algorithm III to solve it (the details of which are omitted here for brevity). Let $\{\hat{p}_{k,n},\hat{\psi}_{m,n}\}$ denote the obtained solution. Then we fix $p_{k,n}=\hat{p}_{k,n}$'s and find a feasible solution $\psi_{m,n}$'s based on $\hat{\psi}_{m,n}$'s such that $D_{m,n}=\frac{1}{2}\log_2\psi_{m,n}$'s are integers. For any given $m=\bar{m}$, this is done by rounding $\frac{1}{2}\log_2\hat{\psi}_{\bar{m},n}$'s, $\forall n$, to their nearby integers as follows:
\begin{align}\label{eqn:feasible1}
\frac{1}{2}\log_2\psi_{\bar{m},n}=\left\{\begin{array}{ll}\lfloor \frac{1}{2}\log_2\hat{\psi}_{\bar{m},n} \rfloor, & {\rm if} ~ \frac{1}{2}\log_2\hat{\psi}_{\bar{m},n}-\lfloor \frac{1}{2}\log_2\hat{\psi}_{\bar{m},n} \rfloor \leq \alpha_{\bar{m}}, \\ \lceil \frac{1}{2}\log_2\hat{\psi}_{\bar{m},n} \rceil, & {\rm otherwise},\end{array}\right. ~~~ n=1,\cdots,N,
\end{align}where $0\leq \alpha_{\bar{m}} \leq 1$, $\forall \bar{m}$. Similar to Algorithm \ref{table4} for the special case of $K=1$ and $M=1$, the optimal value of $\alpha_{\bar{m}}$ can be efficiently obtained via a simple bisection method, and thus a feasible solution of $\psi_{\bar{m},n}$'s, $\forall n$, is obtained according to (\ref{eqn:feasible1}). Last, by searching $\bar{m}$ from $1$ to $M$, the overall feasible solution $\{\psi_{m,n}\}$ is obtained.

\subsection{Numerical Example}\label{sec:Numerical Results}

In this subsection, we provide a numerical example to evaluate the sum-rate performance of the proposed uniform scalar quantization based scheme in a single C-RAN cluster with $M=7$ RRHs and $K=16$ users randomly distributed in a circular area of radius $100$m. It is assumed that the $B=300$MHz bandwidth of the wireless link is equally divided into $N=64$ SCs, and each user is pre-allocated $N/K=4$ SCs. It is further assumed that the capacities of all the fronthaul links are identical, i.e., $\bar{T}_m=T$, $\forall m$. The other setup parameters are the same as those used in Section \ref{sec:Numerical Example}. Similar to the single-user single-RRH case in Section \ref{sec:Numerical Example}, we provide various benchmark schemes. Note that Benchmark Schemes 1-4 introduced in Section \ref{sec:Numerical Example} can be simply extended to the general case of $K\geq 1$ and $M\geq 1$. Furthermore, to compare the sum-rate performance between our studied OFDMA-based C-RAN and conventional OFDMA-based cellular networks, we also consider the following benchmark scheme.

\begin{itemize}
\item{\bf Benchmark Scheme 5: Conventional OFDMA.} In this scheme, we assume that each RRH operates like conventional BS in cellular networks which directly decodes the messages of its served users, rather than forwarding its received signals to the BBU for a joint decoding. For simplicity, we assume that each user is served by its nearest RRH. Then, the optimal power solution for each user $k$ among its assigned SCs $\Omega_k$ is the standard ``water-filling'' solution given in (\ref{eqn:water filling}).
\end{itemize}

\begin{figure}
\begin{center}
\scalebox{0.5}{\includegraphics*{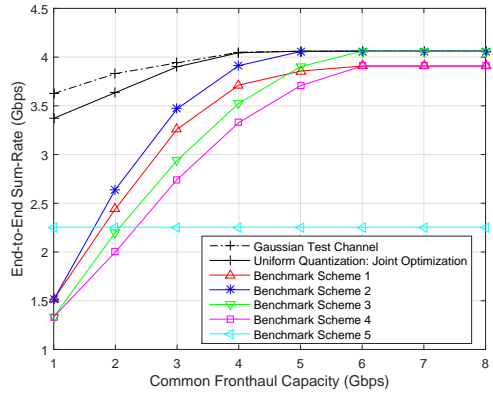}}
\end{center}\vspace{-20pt}
\caption{Uniform scalar quantization: end-to-end sum-rate versus common fronthaul link capacity.}\label{fig14} \vspace{-25pt}
\end{figure}

Fig. \ref{fig14} shows the end-to-end sum-rate performance versus the common fronthaul link capacity, $T$, achieved by uniform quantization, Gaussian test channel, as well as Benchmark Schemes 1-5 (Note that in Benchmark Scheme 5, since each RRH decodes the messages locally, we assume that the sum-rate is a constant regardless of fronthaul capacities). It is observed that with our proposed algorithm in Section \ref{sec:Joint Optimization of Wireless Power Allocation and Fronthaul Rate Allocation}, the sum-rate achieved by the uniform scalar quantization based scheme is very close to that achieved by the Gaussian test channel based scheme for various fronthaul capacities. Furthermore, this performance gap vanishes as the fronthaul link capacities increase at all RRHs. It is also observed that compared with Benchmark Schemes 1-4 where only either wireless power or fronthaul rate allocation is optimized, our joint optimization solution proposed in Section \ref{sec:Joint Optimization of Wireless Power Allocation and Fronthaul Rate Allocation} achieves a much higher sum-rate, especially when the fronthaul link capacities are not sufficiently high. By comparing with Fig. \ref{fig4}, it is observed that the joint optimization gain is more significant over the case of single user and single RRH. Last, it is observed that with joint optimization of wireless and fronthaul resource allocation, the sum-rate achieved by proposed OFDMA-based C-RAN is much higher than that achieved by Benchmark Scheme 5, i.e., conventional OFDMA, under the moderate capacity of current commercial fronthaul such as several Gbps.

\section{Conclusions and Future Work}\label{sec:Conclusion}

In this paper, we have proposed joint wireless power control and fronthaul rate allocation optimization to maximize the throughput performance of an OFDMA-based broadband C-RAN system. In particular, we have considered using practical uniform scalar quantization instead of the information-theoretical quantization method in the system design. Efficient algorithms have been proposed to solve the joint optimization problems. Our results showed that the joint design achieves significant performance gain compared to optimizing either wireless power control or fronthaul rate allocation. Besides, we showed that the throughput performance of the proposed simple uniform scalar quantization is very close to the performance upper (cut-set) bound. This has verified that high throughput performance could be practically achieved with C-RAN using simple fronthaul signal quantization methods.

There are also many interesting topics to be studied in the area of fronthaul-constrained OFDMA-based C-RAN system. For instance, the impact of imperfect fronthaul link with packet loss of quantized data; dynamic SC allocation among mobile users; multiple users coexisting on one SC to further improve the spectral efficiency; distributed quantization among RRHs to exploit the signal correlations; and joint wireless resource and fronthaul rate allocations in the downlink, etc.

\begin{appendix}

\subsection{Uniform Scalar Quantization}\label{appendix1}
In this appendix, we provide the details on the implementation of uniform scalar quantization introduced in Section \ref{sec:Uniform Quantization Model}. First, each RRH normalizes the I-branch and Q-branch symbols at each SC into the interval $[-1,1]$ for quantization by the following scaling process:
\begin{align}\label{eqn:scaling}
\bar{y}_{m,n}^\chi=\frac{y_{m,n}^\chi}{\eta_{m,n}^\chi}, ~~~ \chi \in \{I,Q\}, ~~~ \forall m,n.
\end{align}Since $y_{m,n}^I$'s and $y_{m,n}^Q$'s are both real Gaussian random variables the instantaneous power of which can go to infinity in some instances, the probability of overflow should be controlled by a proper selection of the scaling factors $\eta_{m,n}^I$'s and $\eta_{m,n}^Q$'s. In this paper, we apply the so-called ``three-sigma rule'' \cite{Gray98} to select the scaling factors. Specifically, since the average power of $y_{m,n}^I$ and $y_{m,n}^Q$ are both $(|h_{m,k,n}|^2p_{k,n}+\sigma_{m,n}^2)/2$, we set
\begin{align}\label{eqn:scaling factor}
\eta_{m,n}^I=\eta_{m,n}^Q\triangleq\eta_{m,n}=3\sqrt{\frac{|h_{m,k,n}|^2p_{k,n}+\sigma_{m,n}^2}{2}}, ~~~ \forall m,n.
\end{align}As a result, the probability of overflow for both the I-branch and Q-branch symbols is expressed as
\begin{align}\label{eqn:overflow}
P(|\bar{y}_{m,n}^I|>1)=P(|\bar{y}_{m,n}^Q|>1)=2Q(3)=0.0027, ~~~ \forall m,n.
\end{align}Note that in the case of overflow, the quantized value can be set to be $1$ if the scaled symbol is larger than $1$ or $-1$ if it is smaller than $-1$.

Next, RRH $m$ implements uniform quantization on the normalized symbols $\bar{y}_{m,n}^I$'s and $\bar{y}_{m,n}^Q$'s at each SC in the interval $[-1,1]$. We assume that RRH $m$ uses $D_{m,n}\geq 1 $ bits to quantize the symbol received on SC $n$, resulting $2^{D_{m,n}}$ quantization levels, for which the quantization step size is given by\begin{align}\label{eqn:quantization step size}
\Delta_{m,n}=\frac{2}{2^{D_{m,n}}}=2^{1-D_{m,n}}, ~~~ \forall m,n.
\end{align}Furthermore, for each normalized symbol $\bar{y}_{m,n}^I$ or $\bar{y}_{m,n}^Q$, its quantized value is given by
\begin{align}\label{eqn:quantized value}
\check{y}_{m,n}^\chi=\frac{\lceil 2^{D_{m,n}-1}\bar{y}_{m,n}^\chi\rceil}{2^{D_{m,n}-1}}-\frac{1}{2^{D_{m,n}}}, ~~~ \chi\in\{I,Q\}, ~ \forall m,n,
\end{align}where $\lceil x \rceil$ denotes the minimum integer that is no smaller than $x$. Then, $\check{y}_{m,n}^I$'s and $\check{y}_{m,n}^Q$'s are encoded into digital codewords $\hat{y}_{m,n}^I$'s and $\hat{y}_{m,n}^Q$'s and transmitted to the BBU.

\subsection{Proof of Proposition \ref{fronthaul}}\label{appendix2}
Note that the I, Q symbols, i.e., $y_{m,n}^I$'s and $y_{m,n}^Q$'s, are obtained by sampling of the I, Q waveforms, the bandwidth of which is $B/2N$, $\forall m,n$. As a result, at each RRH, the Nyquist sampling rate for the I, Q waveforms at each SC is $B/N$ samples per second. Furthermore, since at RRH $m$, each sample at SC $n$ is represented by $D_{m,n}$ bits, the corresponding transmission rate in the fronthaul link is expressed as
\begin{align}
T_{m,n}^{(U)}=\frac{BD_{m,n}}{N}+\frac{BD_{m,n}}{N}=\frac{2BD_{m,n}}{N}.
\end{align}Then, the overall transmission rate from RRH $m$ to the BBU in the fronthaul link is given as
$T_m^{(U)}=\sum_{n=1}^NT_{m,n}^{(U)}$, which should not exceed the fronthaul link capacity $\bar{T}_m$, $\forall m$. Proposition \ref{fronthaul} is thus proved.

\subsection{Proof of Proposition \ref{rate}}\label{appendix3}
To derive the end-to-end sum-rate, we need to calculate the power of the quantization error given in (\ref{eqn:quantized signal}), i.e., $q_{m,n}$, $\forall m,n$. Note that in (\ref{eqn:quantized signal}) we have $\tilde{y}_{m,n}=\eta_{m,n}(\check{y}_{m,n}^I+j\check{y}_{m,n}^Q)$, $\forall m,n$. According to Widrow Theorem \cite{Franklin90}, if the number of quantization levels (i.e., $2^{D_{m,n}}$) is large, and the signal varies by at least some quantization levels from sample to sample, the quantization noise can be assumed to be uniformly distributed. As a result, we assume that the quantization errors for both I, Q signals, which are denoted by $e_{m,n}^I$ and $e_{m,n}^Q$ with $e_{m,n}=e_{m,n}^I+je_{m,n}^Q$, are uniformly distributed %\footnote{For the case when $D_{m,n}$ is small, the quantization noise is practically non-uniformly distributed. In this paper, we assume uniformly distributed quantization noise for obtaining an approximate closed-form expression of the quantization noise power $q_{m,n}$ as shown in (\ref{eqn:quantization noise power}), which has been verified by simulation to be quite accurate even for the case of small $D_{m,n}$. For example, by setting $|h_{m,k,n}|^2p_{k,n}+\sigma_{m,n}^2=1$, when $D_{m,n}=1, 2, 3$, $q_{m,n}$ obtained by (\ref{eqn:quantization noise power}) is $0.16667$, $
%0.04167$, $0.01042$, while that obtained by simulation is $0.18948$, $0.04177$, $0.01047$, indicating the approximation becomes more accurate as $D_{m,n}$ increases.}
in $[-\eta_{m,n}\Delta_{m,n}/2,\eta_{m,n}\Delta_{m,n}/2]$, $\forall m,n$. Then we have
\begin{align}\label{eqn:quantization noise power}
q_{m,n}& =\int_{-\frac{\eta_{m,n} \Delta_{m,n}}{2}}^{\frac{\eta_{m,n} \Delta_{m,n}}{2}}\frac{(e_{m,n}^I)^2}{\eta_{m,n} \Delta_{m,n}}d e_{m,n}^I+\int_{-\frac{\eta_{m,n} \Delta_{m,n}}{2}}^{\frac{\eta_{m,n} \Delta_{m,n}}{2}}\frac{(e_{m,n}^Q)^2}{\eta_{m,n} \Delta_{m,n}}d e_{m,n}^Q \nonumber \\ & =\frac{\eta_{m,n}^2\Delta_{m,n}^2}{6}=3(|h_{m,k,n}|^2p_{k,n}+\sigma_{m,n}^2)2^{-2D_{m,n}} \nonumber \\ & \overset{(a)}{=} 3(|h_{m,k,n}|^2p_{k,n}+\sigma_{m,n}^2)2^{-\frac{NT_{m,n}^{(U)}}{B}},
\end{align}where $(a)$ is obtained by substituting $D_{m,n}$ by $T_{m,n}^{(U)}$ according to (\ref{eqn:fronthaul link uniform quantization SC}). Then according to (\ref{eqn:optimal SINR}), a lower bound for the achievable rate of user $k$ at SC $n$, by viewing $\mv{w}_n^H\mv{e}_n$ given in (\ref{eqn:beamforming}) as the worst-case Gaussian noise (it is worth noting that the equivalent quantization error given in (\ref{eqn:beamforming}), i.e., $\mv{w}_n^H\mv{e}_n$, is the summation of $N$ independent uniform distributed random variables $e_{m,n}$'s. According to the central limit theory, $\mv{w}_n^H\mv{e}_n$ tends to be Gaussian distributed when $N$ is large), can be expressed as
\begin{align}
R_{k,n}^{(U)}&=\frac{B}{N}\log_2\left(1+\sum\limits_{m=1}^M\frac{|h_{m,k,n}|^2p_{k,n}}{\sigma_{m,n}^2+q_{m,n}}\right)\nonumber \\ & =\frac{B}{N}\log_2\left(1+\sum\limits_{m=1}^M\frac{|h_{m,k,n}|^2p_{k,n}}{\sigma_{m,n}^2+3(|h_{m,k,n}|^2p_{k,n}+\sigma_{m,n}^2)2^{-\frac{NT_{m,n}^{(U)}}{B}}}\right).
\end{align}The end-to-end throughput of all users is thus expressed as
$R_{{\rm sum}}^{(U)}=\sum_{k=1}^K \sum_{n\in \Omega_k} R_{k,n}^{(U)}$. Proposition \ref{rate} is thus proved.

\subsection{Proof of Proposition \ref{proposition1}}\label{appendix4}
The Lagrangian of problem (\ref{eqn:p2}) is expressed as
\begin{align}\label{eqn:lagrangian 1}
\mathcal{L}(\{p_n\},\lambda)=\frac{1}{N}\sum\limits_{n=1}^N\log_2\left(1+\frac{|h_n|^2p_n}{\sigma_n^2+\frac{|h_n|^2p_n+\sigma_n^2}{2^{N\hat{T}_n^{(G)}/B}-1}}\right)-\lambda\left(\sum\limits_{n=1}^Np_n-\bar{P}\right),
\end{align}where $\lambda$ is the dual variable associated with the transmit power constraint in problem (\ref{eqn:p2}). Then, the Lagrangian dual function of problem (\ref{eqn:p2}) is expressed as
\begin{align}\label{eqn:dual function 1}
g(\lambda)=\max\limits_{p_n\geq 0, \forall n} \mathcal{L}(\{p_n\},\lambda).
\end{align}The maximization problem (\ref{eqn:dual function 1}) can be decoupled into parallel subproblems all having the same structure and each for one SC. For one particular SC, the associated subproblem is expressed as
\begin{align}\label{eqn:subproblem 1}
\max\limits_{p_n\geq 0} ~ \mathcal{L}_n(p_n),
\end{align}where
\begin{align}\label{eqn:power 1}
\mathcal{L}_n(p_n)=\frac{1}{N}\log_2\left(1+\frac{|h_n|^2p_n}{\sigma_n^2+\frac{|h_n|^2p_n+\sigma_n^2}{2^{N\hat{T}_n^{(G)}/B}-1}}\right)-\lambda p_n, ~~~ n=1,\cdots,N.
\end{align}It can be shown that $\mathcal{L}_n(p_n)$ is concave over $p_n$, $\forall n$. The derivative of $\mathcal{L}_n(p_n)$ over $p_n$ is expressed as
\begin{align}\label{eqn:derivative 1}
\frac{\partial \mathcal{L}_n(p_n)}{\partial p_n}=\frac{|h_n|^2\sigma_n^2\left(2^{\frac{N\Hat{T}_n^{(G)}}{B}}-1\right)}{\left(|h_n|^2p_n+\sigma_n^22^{\frac{N\Hat{T}_n^{(G)}}{B}}\right)(|h_n|^2p_n+\sigma_n^2)N\ln 2}-\lambda, ~~~ \forall n.
\end{align}By setting $\partial \mathcal{L}_n(p_n)/\partial p_n=0$, we have
\begin{align}\label{eqn:equation}
p_n^2+\alpha_n p_n +\eta_n=0, ~~~ n=1,\cdots,N,
\end{align}where $\alpha_n$'s and $\eta_n$'s are given in (\ref{eqn:alpha}) and (\ref{eqn:eta}), respectively. If $\eta_n<0$, then there exists a unique positive solution to the quadratic equation (\ref{eqn:equation}), denoted by $\tilde{p}_n=(-\alpha_n+\sqrt{\alpha_n^2-4\eta_n})/2$. In this case, $\mathcal{L}_n(p_n)$ is an increasing function over $p_n$ in the interval $(0,\tilde{p}_n)$, and decreasing function in the interval $[\tilde{p}_n,\infty)$. As a result, $\mathcal{L}_n(p_n)$ is maximized when $p_n=\tilde{p}_n$. Otherwise, if $\eta\geq 0$, there is no positive solution to the quadratic equation (\ref{eqn:equation}), and thus $\mathcal{L}_n(p_n)$ is a decreasing function over $p_n$ in the interval $(0,\infty)$. In this case, $\mathcal{L}_n(p_n)$ is maximized when $p_n=0$.

After problem (\ref{eqn:dual function 1}) is solved given any $\lambda$, in the following we explain how to find the optimal dual solution for $\lambda$. It can be shown that the objective function in problem (\ref{eqn:p2}) is an increasing function over $\{p_n\}$, and thus the transmit power constraint must be tight in problem (\ref{eqn:p2}). As a result, the optimal $\lambda$ can be efficiently obtained by a simple bisection method such that the transmit power constraint is tight in problem (\ref{eqn:p2}). Proposition \ref{proposition1} is thus proved.

\subsection{Proof of Proposition \ref{proposition2}}\label{appendix5}

Let $\beta$ denote the dual variable associated with the fronthaul link capacity constraint in problem (\ref{eqn:p3}). Similar to Appendix \ref{appendix1}, it can be shown that problem (\ref{eqn:p3}) can be decoupled into the $N$ subproblems with each one formulated as
\begin{align}\label{eqn:subproblem 2}
\max\limits_{T_n^{(G)}\geq 0} ~ \mathcal{L}_n(T_n^{(G)}),
\end{align}where
\begin{align}\label{eqn:fronthaul rate 2}
\mathcal{L}_n(T_n^{(G)})=\frac{1}{N}\log_2\left(1+\frac{|h_n|^2\hat{p}_n}{\sigma_n^2+\frac{|h_n|^2\hat{p}_n+\sigma_n^2}{2^{NT_n^{(G)}/B}-1}}\right)-\beta T_n, ~~~ n=1,\cdots,N.
\end{align}The derivative of $\mathcal{L}_n(T_n^{(G)})$ over $T_n^{(G)}$ is expressed as
\begin{align}
\frac{\partial \mathcal{L}_n(T_n^{(G)})}{\partial T_n^{(G)}}=\frac{1}{B}-\beta-\frac{\sigma_n^2 2^{\frac{NT_n^{(G)}}{B}}}{B\left(|h_n|^2\hat{p}_n+\sigma_n^2 2^{\frac{NT_n^{(G)}}{B}}\right)}, ~~~ n=1,\cdots,N.
\end{align}If $\beta\geq \frac{1}{B}$, then $\frac{\partial \mathcal{L}_n(T_n^{(G)})}{\partial T_n^{(G)}}\leq 0$, i.e., $\mathcal{L}_n(T_n^{(G)})$ is a decreasing function over $T_n^{(G)}$, $\forall n$. In this case, we have $\bar{T}_n^{(G)}=0$, $\forall n$, which cannot be the optimal solution to problem (\ref{eqn:p3}). As a result, the optimal dual solution must satisfy $\beta<\frac{1}{B}$. In this case, it can be shown that $\mathcal{L}_n(T_n^{(G)})$ is an increasing function over $T_n^{(G)}$ when $T_n^{(G)}<\frac{B}{N}\log_2\frac{(1-\beta B)|h_n|^2\hat{p}_n}{\beta B \sigma_n^2}$, and decreasing function otherwise. As a result, $\mathcal{L}_n(T_n^{(G)})$ is maximized at $T_n^{(G)}=\max(\frac{B}{N}\log_2\frac{(1-\beta B)|h_n|^2\hat{p}_n}{\beta B \sigma_n^2},0)$.
After problem (\ref{eqn:subproblem 2}) is solved given any $\beta<\frac{1}{B}$, the optimal $\beta$ that is the dual solution to problem (\ref{eqn:p3}) can be efficiently obtained by a simple bisection method over $(0,\frac{1}{B})$ such that the fronthaul link capacity constraint is tight in problem (\ref{eqn:p3}). Proposition \ref{proposition2} is thus proved.

\subsection{Proof of Proposition \ref{proposition3}}\label{appendix6}
With constraints given in (\ref{eqn:constraint1}), $R_{{\rm sum}}^{(G)}$ given in (\ref{eqn:test channel rate}) reduces to
\begin{align}
\frac{R_{{\rm sum}}^{(G)}}{B}=\frac{1}{N}\sum\limits_{n=1}^N\log_2\left(1+\frac{|h_n|^2p_n}{2\sigma_n^2}\right).
\end{align}Moreover, it can be shown from (\ref{eqn:constraint1}) that
\begin{align}
\frac{T^{(G)}}{B}=\frac{1}{N}\sum\limits_{n=1}^N\log_2\left(2+\frac{|h_n|^2p_n}{\sigma_n^2}\right)=\frac{R_{{\rm sum}}^{(G)}}{B}+1.
\end{align}Thereby, with the additional constraints given in (\ref{eqn:constraint1}), problem (P1) can be simplified as the following power control problem.
\begin{align}\mathop{\mathtt{Maximize}}_{\{p_n\}} & ~~~ \frac{1}{N}\sum\limits_{n=1}^N\log_2\left(1+\frac{|h_n|^2p_n}{2\sigma_n^2}\right) \nonumber \\
\mathtt {Subject \ to} & ~~~ \frac{1}{N}\sum\limits_{n=1}^N\log_2\left(1+\frac{|h_n|^2p_n}{2\sigma_n^2}\right)+1\leq \frac{\bar{T}}{B} \nonumber \\ & ~~~ \sum\limits_{n=1}^Np_n\leq \bar{P}. \label{eqn:fix quantization in P1}
\end{align}Let $\{\check{p}_n\}$ and $\{\tilde{p}_n\}$ denote the optimal power solution to problem (\ref{eqn:fix quantization in P1}) and the relaxed version of problem (\ref{eqn:fix quantization in P1}) without the first fronthaul link constraint, respectively. If $(1/N)\sum_{n=1}^N\log_2(1+|h_n|^2\tilde{p}_n/2\sigma_n^2)+1\leq \bar{T}/B$, we have $\check{p}_n=\tilde{p}_n$, $\forall n$. Otherwise, it can be shown that any feasible solution to the following problem is optimal to problem (\ref{eqn:fix quantization in P1}):
\begin{align}\mathop{\mathtt{Find}} & ~~~ \{p_n\} \nonumber \\
\mathtt {Subject \ to} & ~~~ \frac{1}{N}\sum\limits_{n=1}^N\log_2\left(1+\frac{|h_n|^2p_n}{2\sigma_n^2}\right)+1= \frac{\bar{T}}{B} \nonumber \\ & ~~~ \sum\limits_{n=1}^Np_n\leq \bar{P}. \label{eqn:fix quantization in P3}
\end{align}To summarize, the cut-set bound based optimal value of problem (\ref{eqn:fix quantization in P1}) is expressed as
\begin{align}\label{eqn:optimal value}
\frac{\bar{R}_{{\rm sum}}^{(G)}}{B}=\min\left\{\frac{1}{N}\sum\limits_{n=1}^N\log_2\left(1+\frac{|h_n|^2\tilde{p}_n}{2\sigma_n^2}\right),\frac{\bar{T}}{B}-1\right\}.
\end{align}In the following, we compare this optimal value with the capacity upper bound $C$ given in (\ref{eqn:capacity upper bound}). First, we have
\begin{align}
\frac{1}{N}\sum\limits_{n=1}^N\log_2\left(1+\frac{|h_n|^2p_n^{{\rm wf}}}{\sigma_n^2}\right)-1 & <
\frac{1}{N}\sum\limits_{n=1}^N\log_2\left(1+\frac{|h_n|^2p_n^{{\rm wf}}}{2\sigma_n^2}\right) \nonumber \\ &\overset{(a)}{\leq}
\frac{1}{N}\sum\limits_{n=1}^N\log_2\left(1+\frac{|h_n|^2\tilde{p}_n}{2\sigma_n^2}\right),
\end{align}where $(a)$ is because $\{\tilde{p}_n\}$ is the optimal power solution to problem (\ref{eqn:fix quantization in P1}) without the fronthaul link constraint. It then follows that
\begin{align}
\frac{\bar{R}_{{\rm sum}}^{(G)}}{B}&=\min\left\{\frac{1}{N}\sum\limits_{n=1}^N\log_2\left(1+\frac{|h_n|^2\tilde{p}_n}{2\sigma_n^2}\right),\frac{\bar{T}}{B}-1\right\} \nonumber \\ &\geq \min\left\{\frac{1}{N}\sum\limits_{n=1}^N\log_2\left(1+\frac{|h_n|^2p_n^{{\rm wf}}}{\sigma_n^2}\right)-1,\frac{\bar{T}}{B}-1\right\} = C-1.
\end{align}Proposition \ref{proposition3} is thus proved.

\subsection{Proof of Proposition \ref{proposition4}}\label{appendix7}
First, it follows that
\begin{align}
T^{(U)}=\frac{B}{N}\sum\limits_{n=1}^N\log_2\left(1+\frac{|h_n|^2\check{p}_n}{\sigma_n^2}\right) < \frac{B}{N}\sum\limits_{n=1}^N\log_2\left(2+\frac{|h_n|^2\check{p}_n}{\sigma_n^2}\right) \leq \bar{T}.
\end{align}As a result, $\{p_n=\check{p}_n,T_n^{(G)}=(B/N)\log_2(1+|h_n|^2\check{p}_n/\sigma_n^2)\}$ is a feasible solution to problem (P2-NoInt). Furthermore, with $\{p_n=\check{p}_n,T_n^{(G)}=(B/N)\log_2(1+|h_n|^2\check{p}_n/\sigma_n^2)\}$, $R_{{\rm sum}}^{(U)}$ given in (\ref{eqn:uniform quantization sum-rate}) reduces to
\begin{align}
R_{{\rm sum}}^{(U)}=\frac{B}{N}\sum\limits_{n=1}^N\log_2\left(1+\frac{|h_n|^2\check{p}_n}{4\sigma_n^2}\right).
\end{align}It then follows that
\begin{align}
\frac{\bar{R}_{{\rm sum}}^{(U)}}{B} > \frac{1}{N}\sum\limits_{n=1}^N\log_2\left(1+\frac{|h_n|^2\check{p}_n}{2\sigma_n^2}\right)-1 = \frac{\bar{R}_{{\rm sum}}^{(G)}}{B}-1.
\end{align}Proposition \ref{proposition4} is thus proved.

\subsection{Proof of Lemma \ref{lemma1}}\label{appendix8}
Define
\begin{align}
\varphi_{m,k,n}(p_{k,n})=\frac{|h_{m,k,n}|^2p_{k,n}\hat{\psi}_{m,n}}{\sigma_{m,n}^2\hat{\psi}_{m,n}+|h_{m,k,n}|^2p_{k,n}+\sigma_{m,n}^2}, ~~~ {\rm if} ~ n\in \Omega_k, ~ \forall m,n.
\end{align}Then, it can be shown that $\varphi_{m,k,n}(p_{k,n})$ is concave over $p_{k,n}$, $\forall m,n$. As a result, $\sum_{m=1}^M\varphi_{m,k,n}(p_{k,n})$ is concave over $p_{k,n}$, $\forall k,n$. According to the composition rule \cite{Boyd04}, $\log_2(1+\sum_{m=1}^M\varphi_{m,k,n}(p_{k,n}))$ is concave over $p_{k,n}$, $\forall k,n$. It then follows that the objective function of problem (\ref{eqn:p5}), i.e., $\sum_{k=1}^K\sum_{n\in \Omega_k} \log_2(1+\sum_{m=1}^M\varphi_{m,k,n}(p_{k,n}))$, is concave over $\{p_{k,n}\}$. Lemma \ref{lemma1} is thus proved.

\subsection{Proof of Lemma \ref{lemma2}}\label{appendix9}
First, due to the inequality given in (\ref{eqn:first order approximation}), any feasible solution to problem (\ref{eqn:p7}) must be a feasible solution to problem (\ref{eqn:p6}). Thereby, $\check{\psi}_{m,n}$'s must be feasible to problem (\ref{eqn:p6}). Next, it can be observed that if $\tilde{\psi}_{m,n}$'s is feasible to problem (\ref{eqn:p6}), it must be feasible to problem (\ref{eqn:p7}). Since $\check{\psi}_{m,n}$'s is the optimal solution to problem (\ref{eqn:p7}), the sum-rate achieved by it must be no smaller than that achieved by $\tilde{\psi}_{m,n}$'s. Lemma \ref{lemma2} is thus proved.

\end{appendix}

%\bibliographystyle{IEEEtran}
%\bibliography{Reference}

\linespread{1.1}

\end{document}